\theoremstyle{plain}
\newtheorem{thm}{THEOREM}[section]
\newtheorem{lm}[thm]{LEMMA}
\newtheorem{cl}[thm]{COROLLARY}
\newtheorem{prop}[thm]{PROPOSITION}
\theoremstyle{definition}
\theoremstyle{definition}
\newtheorem{remark}[thm]{Remark}
\newcommand{\upchi}{\raise1pt\hbox{$\chi$}}
\newcommand{\R}{{\mathord{\mathbb R}}}
\newcommand{\C}{{\mathord{\mathbb C}}}
\newcommand{\Z}{{\mathord{\mathbb Z}}}
\newcommand{\E}{{\mathcal{E}}}
\renewcommand{\|}{{\Vert}}
\numberwithin{equation}{section}
\def\dd{{\rm d}}
\def\T{\mathbb{T}^1}
\def\H{\mathcal{H}}
\def\LL{\mathcal{L}}
\def\K{\mathcal{K}}
\def\E{\mathcal{E}}
\begin{document}

\title{Quantitative Bounds on the Rate of Approach to  Equilibrium for some One-Dimensional
Stochastic Non-Linear Schr\"odinger Equations}

\author{Eric A. Carlen$^{1}$, J\"urg Fr\"ohlich$^{2}$, Joel Lebowitz$^{3}$, and Wei-Min Wang$^{4}$}

\maketitle

\abstract

\textit{We establish quantitative bounds on the rate of approach to equilibrium for a system with infinitely many degrees of freedom evolving according to a one-dimensional focusing nonlinear Schr\"odinger equation with diffusive forcing. Equilibrium is described by a generalized grand canonical ensemble. Our analysis also applies to the easier case of defocusing nonlinearities.}

.

\section{Introduction}

In this paper we continue our study of the focusing non-linear Schr\"odinger equation (NLS) with diffusive forcing in one dimension, extending our earlier methods \cite{LMW,CFL} to obtain \textit{quantitative} bounds on the rate of exponential relaxation to equilibrium. 

The one-dimensional deterministic NLS that we study in the following reads
\begin{equation}\label{NLS1}
i\frac{\partial }{\partial t} \phi(x,t)  = - \frac{\partial^2}{\partial x^2} \phi(x,t) + m^2\phi(x,t) - 
\lambda \big(|\phi(x,t)|^{p-2}+ \ell. o. \big)\phi(x,t)  + \kappa \|\phi\|_2^{2r-2}\phi(x,t),
\end{equation} 
where $t\in \mathbb{R}$ is time, $x$ is a point in the circle $\T$ of circumference $L$, ``$\ell.o.$'' stands for terms in 
$\vert \phi(x,t)\vert$ of order strictly lower than $p-2$ that will henceforth be neglected; ($m,\lambda,\kappa$ are positive constants, and the exponents $p$ and $r$ satisfy $p<6$ and $r > p + 2/(6-p)$). It is well known that Eq. \eqref{NLS1} is a Hamiltonian evolution equation, and, under the conditions specified here, the Gibbs measure corresponding to its Hamiltonian functional exists; (see \cite{CFL} and further discussion below).  

Equation (\ref{NLS1}) is used to describe the slowly varying envelopes of Langmuir waves in a plasma, besides various other physical phenomena.

The evolution described by Eq. (\ref{NLS1}) corresponds to the flow generated by a Hamiltonian vector field on an infinite-dimensional phase space, $\mathcal{K}$, given by the Sobolev space $H^1(\T)$. This space consists of 
complex-valued functions, $\phi$, on $\T$ with square-integrable derivative, $\phi'$, and is equipped with the norm
$${\displaystyle \|\phi\|_{H^1(\T)} =  \left( \int_{\T} |\phi'(x)|^2{\rm d}x +  
\int_{\T} |\phi(x)|^2 \text{d}x \right)^{\frac{1}{2}}.}$$ 
The phase space $\K$ can be viewed, more precisely, as the real affine space obtained by regarding the complex space $H^{1}(\T)$ as a real Hilbert space equipped with the inner product 
$$\langle \phi,\psi\rangle_\K = \Re(\langle \phi,\psi\rangle_{H^1(\T)},$$
where $\Re(z) $ denotes the real part of $z\in \C$. 
The Hamiltonian nature of the time evolution described by Eq. \eqref{NLS1} can be made manifest by equipping the algebra of bounded Fr\'{e}chet-differentiable functionals on $\mathcal{K}$ with a Poisson bracket determined by the following brackets of the complex coordinate functions:
\begin{equation}
\label{Poisson}
\lbrace \phi(x), \phi(y)\rbrace =0, \quad \lbrace \overline{\phi}(x), \overline{\phi}(y) \rbrace =0, \quad
\lbrace \phi(x), \overline{\phi}(y) \rbrace = i \delta(x-y),
\end{equation}
for arbitrary $x, y$ in $\mathbb{T}^{1}$.

The Hamiltonian functional, $H_{\lambda, \kappa}$, on $\mathcal{K}$ corresponding to Eq. \eqref{NLS1} is defined by
\begin{equation}\label{ham}
H_{\lambda,\kappa}(\phi) := \frac12  \int_{\T} (m^2|\phi(x)|^2+|\phi'(x)|^2){\rm d}x  - \frac{\lambda} {p} \int_{\T} |\phi(x)|^p {\rm d}x +  \frac{\kappa}{2r} \|\phi\|_2^{2r}\ ,
\end{equation}
where $$\lambda \in \mathbb{R}, \quad p < 6, \quad \kappa > 0 \text{  if  }\lambda > 0, \text{  and  } \kappa \geq 0 \text {  otherwise},$$
with $r> p+\frac{2}{6-p}, \text{   for  } \lambda >0,$ and $r=0$, in the defocusing case, ($\lambda < 0$). Since every function $\phi \in H^1(\T)$ is bounded and hence in $L^{p}(\mathbb{T}^{1})$, for all $p$, the Hamiltonian is well-defined and finite on all of $\K$.
Using the Poisson brackets determined by \eqref{Poisson}, one easily verifies that the NLS equation \eqref{NLS1} is equivalent to the equation
$$\dot{\phi}(x,t)= \{ H_{\lambda, \kappa}(\phi), \phi(x,t) \},$$
which renders the Hamiltonian nature of \eqref{NLS1} manifest.
The last term on the right side of \eqref{ham}, which merely gives rise to a time-dependent phase of solutions to equation (\ref{NLS1}), enforces a lower bound on the Hamiltonian $H_{\lambda,\kappa}$, for an appropriate choice of the exponent $r$ and the constant $\kappa$. This will play an important role in our considerations.\\
We remark that, in our analysis, the function $\vert \phi(x) \vert^{p}$ under the integral in \eqref{ham} could be replaced by a more general functional of $\phi(x)$ bounded by a power of $\vert \phi(x) \vert$ and also by certain non-local functionals of $\phi$.

Equation (\ref{NLS1}) can be written 
as an infinite-dimensional ordinary differential equation:
$$ \dd \phi(t) = JDH_{\lambda,\kappa}(\phi(t))\dd t \ ,$$
where $J$ is the complex structure defined by
\begin{equation}\label{Jdef}
J\phi = i \phi, \quad \text{  for an arbitrary vector   }\,\phi \text{  tangent to   } \,\mathcal{K},
\end{equation}
and $D$ denotes the Fr\'echet derivative defined on functionals on $\mathcal{K}$.

For $p< 6$, the Gibbs measure corresponding to the Hamiltonian $H_{\lambda, \kappa}$, which in standard physical notation can be written as
\begin{equation}\label{fg}
{\rm d}\mu_{\beta, \lambda, \kappa}(\phi) :=  \frac{1}{Z_{\beta, \lambda, \kappa}} e^{-\beta H_{\lambda,\kappa}(\phi)} \mathcal{D} \phi  \mathcal{D} \overline{\phi}\ ,
\end{equation}
is well defined  provided $r > p+2/(6-p)$; see \cite[Theorem 3.6]{CFL}. Henceforth we will sometimes omit the letters 
$\beta$,  $\lambda$, $\kappa$ and $r$ from our notation, writing $H$, instead of $H_{\lambda, \kappa}$, and $\text{d}\mu$, instead of $\text{d}\mu_{\beta, \lambda, \kappa}$.

The measure ${\rm d}\mu$ is absolutely continuous with respect to the Gaussian measure ${\rm d}\mu_0$ defined by
\begin{equation}\label{fg2}
{\rm d}\mu_{0} :=  \frac{1}{Z} e^{-\beta H_{0}(\phi)} \mathcal{D} \phi  \mathcal{D} \overline{\phi}\ ,
\end{equation}
where
\begin{equation}\label{h0}
H_0(\phi)  = \frac12 \int_{\T}[ |\phi'(x)|^2 + m^2|\phi(x)|^2]{\rm d}x  \ 
\end{equation}
is the free Hamiltonian with mass $m>0$. 
The covariance of the Gaussian $\text{d}\mu_{0}$ is given by the operator
\begin{equation}\label{cov}
C := \beta^{-1}(m^2 - \Delta)^{-1}\ .
\end{equation}

Let $\sigma$ be a self-adjoint Hilbert-Schmidt operator on $\K$, so that $\sigma^2$ is a positive, trace-class operator on $\K$. Let $w(t)$ denote ``Brownian motion on $\K${''}, and consider the stochastic differential equation 
\begin{equation}\label{noise2}
 \dd \phi(t) = JDH(\phi(t))\dd t  - \frac{\beta}{2} \sigma^2 DH(\phi) \dd t + \sigma \dd w(t)\ .
 \end{equation}
Associated to the stochastic differential equation (\ref{noise2}) is the Kolmogorov backward equation
${\displaystyle 
\frac{\partial}{\partial t}F = \LL F}$,
for smooth functionals $F$  on the  phase space $\mathcal{K}$, where $\LL$ is the generator of the transition function associated with the process in (\ref{noise2}); it is determined by
$${\displaystyle \frac{{\rm d}}{{\rm d} t} \mathbb{E}F(\phi_t) =    \mathbb{E}\LL F(\phi_t)},$$
with $\mathbb{E}$ denoting the expectation with respect to the law of the stochastic process. 
Using Ito's formula, one finds that
\begin{equation}\label{noise3}
\LL F(\phi) = \langle JDH(\phi),DF\rangle -  \mathcal{H} F,
\end{equation}
where $\mathcal{H}$ is the operator corresponding to the quadratic form defined by
\begin{equation}\label{noise4}
 \langle F, \mathcal{H} F\rangle_{L^2(\mu)}  =  \E(F),  \ 
 \end{equation}
with $\mathcal{E}$ given by
\begin{equation}\label{noise5}
\E(F) := \int_\Omega \langle D F,\sigma^2 D F\rangle \frac{1}{Z} e^{-\beta H(\phi)} \mathcal{D} \phi  \mathcal{D} \overline{\phi}.
\end{equation}
The positive quadartic form (``metric'') $\sigma^{2}$ appearing on the right side of  \eqref{noise5} is defined more precisely in \eqref{metric}.\\
The Kolmogorov forward equation is then 
${\displaystyle 
\frac{\partial}{\partial t}\rho = \LL^* \rho}$,
where $\LL^*$ is the adjoint of $\LL$ in the scalar product of $L^2(\mu)$, and $\rho$ is a finite measure on 
$\mathcal{K}$. In a previous study of this model (see \cite{LMW}) a cutoff on the number of modes in the fields was introduced, and the existence of a strictly positive spectral gap for the finite-dimensional problem with cutoffs was proven.

Equation \eqref{NLS1} with $p=4$ has been studied in detail in \cite{CFL}, where it is shown that the semigroup 
$(e^{t\LL})_{t\geq 0}$ generated by the operator $\mathcal{L}$ in \eqref{noise3} is ergodic, 
and that $\LL$ has a strictly positive spectral gap above its lowest eigenvalue, 
provided $r>9$  and $\sigma$ is chosen to be a fractional power of the covariance $C$ 
introduced in (\ref{cov}):
\begin{equation}\label{metric}
\sigma^{2} = C^s, \quad{\rm with}\quad \frac{7}{8} < s < 1\ .
\end{equation}
Under these conditions on $r$ and $s$, a certain operator arising in the analysis of the Dirichlet form can be shown to be 
trace-class, and this provides the crucial compactness property  that is used in \cite{CFL} to prove 
the existence of a spectral gap,
for all positive   values of $\lambda$ and $\kappa$.   Because the proof in \cite{CFL} only exploits the compactness of a certain operator, it does not yield quantitative information on the size of the specrtal gap.
In the present work we prove \textit{quantitative bounds on the gap above the ground state energy} in the spectrum of 
$\mathcal{L}$, for \textit{all} 
values of $\lambda$ and 
$\kappa>0$. We will actually prove a quantitative logarithmic Sobolev inequality, for all $\lambda$ and $\kappa>0$, 
which implies the strict positivity of and an explicit bound on the spectral gap.  Moreover, we avoid introducing any 
cutoffs and work directly with the infinite-dimensional theory.

Note that the stochasticity in (\ref{noise2}) acts on \textit{all} phase space variables, that is, on 
the ``position variables{''} ($\Re(\phi)$) as well as the ``momentum variables{''} ($\Im(\phi)$). This is 
different from what is often studied in stochastic particle systems, where the noise typically acts only on 
the ``momentum variables{''} corresponding, in our case, to the imaginary part of $\phi$. It would be more difficult 
to prove bounds on the rate of approach to equilibrium in this case; see Section 2.

\section{Log-concave measures and logarithmic Sobolev inequalities}

In finite dimensions, the Bakry-Emery Theorem establishes a very useful link between logarithmic Sobolev inequalities 
and log-concavity of measures. We recall some relevant facts before turning to results in infinitely many dimensions.

Let $\nu$ be a finite Borel measure on $\R^n$ of the form ${\rm d}\nu = e^{-V(x)}{\rm d}x$.   
The measure $\nu$  is {\em log-concave} in case $V$ is a convex function on $\R^n$.   
For $c\in \R$,  the measure $\nu$  is {\em  $c$-log-concave} iff the Hessian of $V$, 
${\rm Hess}_V(x)$, satisfies 
\begin{equation}\label{logconcave2}
{\rm Hess}_V(x) \geq c I\ , \quad \forall x \in \mathbb{R}^{n},
\end{equation}
where $I$ is the $n\times n$ identity matrix.   
Equivalently, $\nu$ is $c$-log-concave in case $e^{c|x|^2/2}{\rm d}\nu$ is  log-concave.

Bakry and Emery proved that if
$\nu$  is   $c$-log-concave, for $c>0$, the logarithmic Sobolev inequality (with constant $c$)
\begin{equation}\label{logconcave3}
\int_{\R^n}  |f(x)|^2 \log |f(x)|^2 {\rm d}\nu(x) \leq \frac{2}{c}\int_{\R^n} |\nabla f(x)|^2(x){\rm d}\nu(x)
\ 
\end{equation}
holds for all continuously differentiable functions $f$ on $\R^n$, with $\int_{\R^n}f^2(x){\rm d}\nu(x) =1$. \\
For a differentiable function $u\in L^2(\nu)$ satisfying $\int_{\R^n}u {\rm d}\nu =0$ and $\int_{\R^n}|u|^2 {\rm d}\nu =1$, 
we set $f := \sqrt{1 -\epsilon^2} + \epsilon u$. 
For this choice of $f$ in (\ref{logconcave3}), and keeping only the leading terms in 
$\epsilon$ on both sides of \eqref{logconcave3}, one concludes that
\begin{equation}\label{logconcave4}
\int_{\R^n} |u(x)|^2 {\rm d}\nu(x) \leq \frac{1}{c}\int_{\R^n} |\nabla u(x)|^2{\rm d}\nu(x).\ 
\end{equation}
Thus, the logarithmic Sobolev inequality (\ref{logconcave3}) implies the Poincar\'e
inequality (\ref{logconcave4}), and hence positivity of the spectral gap, for the operator corresponding to the quadratic form
$\mathcal{E}(u):= \int_{\mathbb{R}^{n}} \vert \nabla u(x) \vert^{2} \text{d}\nu(x)$

Bakry and Emery proved their theorem by taking two derivatives of the relative entropy along the 
flow of the semigroup generated by the Dirichlet form. While it is likely that one could extend their analysis 
to the infinite-dimensional setting, we do not know of a suitable reference.  

There is however another approach to the Bakry-Emery Theorem relying on a theorem of 
Caffarelli that has been extended to a suitable  infinite-dimensional setting in a series of papers by 
Feyel and \"Ust\"unel \cite{FU1,FU2,FU3,FU4}.  Their results concern pairs of Dirichlet forms of the type
\begin{equation}\label{H1dir}
\E_1(F) := \int_\Omega \langle D F,\sigma^2 D F\rangle \frac{1}{Z_1} 
e^{-\beta H_{1}(\phi)} \mathcal{D} \phi  \mathcal{D} \overline{\phi}. 
\end{equation}
 and
\begin{equation}\label{H2dir}
\E_2(F) := \int_\Omega \langle D F,\sigma^2 D F\rangle \frac{1}{Z_2}
 e^{-\beta H_{2}(\phi)} \mathcal{D} \phi  \mathcal{D} \overline{\phi},
\end{equation}
where $H_1$ and $H_2$  are  Hamiltonians with the property that the probability measures appearing in the two 
Dirichlet forms are both  absolutely continuous with respect to the \textit{same} Wiener measure.   
Then, roughly speaking, if $H_{2}$ is more convex than $H_{1}$, and if the Dirichlet 
form $\E_1$ satisfies the logarithmic Sobolev inequality with constant $c$, then the 
Dirichlet form $\E_2$ satisfies the logarithmic Sobolev inequality with the {\em same} constant $c$. 
\cite{FU1,FU2,FU3,FU4}.

In our application of this result we shall take  $H_{1}$ to be a positive multiple of the quadratic free Hamiltonian \eqref{h0},
for a strictly positive mass $m$.  
It is well known, going back to results of E. Nelson, P. Federbush and L. Gross \cite{N73,F69,G}, that  the Dirichlet form associated with the corresponding Gaussian measure satisfies the logarithmic Sobolev inequality with an explicitly computable, sharp constant.  Thus, all that is required to prove an explicit logarithmic Sobolev inequality for the Dirichlet form (\ref{noise5}) is to prove that $H$ is more convex than some strictly positive multiple of $H_0$. This turns out to be true for sufficiently small values of $\lambda$, cf. sect. 5.2 in the Appendix. For large values of 
$\lambda$,  $H$ fails to be convex. However, the failure of convexity only occurs in finitely many low-energy modes. 
For \textit{all} values of $\lambda$, we will therefore be able to find a function $W$ that depends on $\phi$ only 
through finitely many modes such that the functional $H + W$ is more convex than a strictly positive multiple of $H_0$, 
and moreover, we shall do this with a point-wise bounded perturbation $W$.   This allows us to apply another 
theorem on logarithmic Sobolev inequalities for a pair of Dirichlet forms such as \eqref{H1dir} and \eqref{H2dir}, but this time with
$H_2 = H_1 + W$ with $\|W\|_\infty < \infty$. The Holley-Stroock Lemma \cite{HS} then says that if $\E_1$ satisfies a logarithmic Sobolev
inequality with a constant $c$ as in \eqref{logconcave3} then $\E_2$ satisfies a logarithmic Sobolev inequality with a constant that 
is no smaller than $ce^{-2\|W\|_\infty}$. Then, as in the passage from \eqref{logconcave3} to \eqref{logconcave4}, we obtain a spectral gap by linearizing around the constant function. Note that while a Dirichlet form may satisfy a spectral gap inequality  without satisfying a logarithmic Sobolev inequality, one advantage of working with logarithmic Sobolev inequalities when they hold is that \eqref{logconcave3}
can be written as
$$
\int_{\R^n}  |f(x)|^2 \log |f(x)|^2 {\rm d}\nu(x)  + \|f\|_2^2 \log \|f\|_2^2 \leq \frac{2}{c}\int_{\R^n} |\nabla f(x)|^2(x){\rm d}\nu(x)
$$
valid for all $f\in L^2$ without any orthogonality constraint such as one has in the spectral gap inequality 
\eqref{logconcave4}. This absence of an orthogonality  constraint, which is quite sensitive to bounded changes of measure, gives the logarithmic Sobolev inequality an advantageous quality of robustness. 
 Our main result is the following theorem.
 
 \begin{thm} Let $H$ be the Hamiltonian specified in Eq. \eqref{ham}, with $p=4$ and $r>5$. Let $\E$ be the Dirichlet form introduced in \eqref{noise5}, and let $\E_0$ be the ``Gaussian Dirichlet form'' given by the same formula, with $H_0$ in place of $H$.  Let $C_0$ denote the constant appearing in the  logarithmic Sobolev inequality for $\E_0$,
 $$\int |F|^2\log |F|^2 {\rm d}\mu_0 \leq \frac{2}{C_0}\E_0(F,F),$$
  for all $F$ with $\int |F|^2{\rm d}\mu_0 =1$.\\
Then, for all $r > 5$ and all positive values of $\lambda$ and $\kappa$, there is a computable constant $C$ depending on these parameters such that the Dirichlet form $\E$ satisfies 
 $$\int |F|^2\log |F|^2 {\rm d}\mu_0 \leq \frac{2}{C}\E(F,F), $$
 for all $F$ with $\int |F|^2{\rm d}\mu =1$.   As $\lambda$ increases to infinity, the constant $C$ diverges to infinity exponentially in a power of $\lambda$. This power is always at least $2$, and approaches $2$ as $r$ approaches infinity. 
 \end{thm}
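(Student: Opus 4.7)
The plan is to chain the two comparison principles identified in Section~2. First, I construct a pointwise bounded functional $W$ on $\K$, depending on $\phi$ only through finitely many Fourier modes, such that $(H+W) - \alpha H_0$ is convex on $\K$ for some fixed $\alpha\in(0,1)$. The infinite-dimensional Caffarelli-type theorem of Feyel and \"Ust\"unel \cite{FU1,FU2,FU3,FU4}, applied to the pair $(\alpha H_0, H+W)$, then transfers the explicit Gaussian log-Sobolev constant of \cite{N73,F69,G} to $\E_{H+W}$; the Holley--Stroock lemma \cite{HS} applied to $(H+W, H)$ transfers it further to $\E$, at the cost of a factor $e^{-2\beta\|W\|_\infty}$. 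The announced exponential dependence on $\lambda$ then follows from a polynomial bound on $\|W\|_\infty$ of the correct degree.

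A direct computation of the Hessian of $H$ in real coordinates gives
$$\mathrm{Hess}(H)(\phi)[\psi,\psi] \;\geq\; \|\psi'\|_2^2 + \bigl(m^2 + \kappa\|\phi\|_2^{2(r-1)}\bigr)\|\psi\|_2^2 - 3\lambda \int_{\T}|\phi|^2|\psi|^2\,{\rm d}x,$$
so the quartic interaction is the only source of non-convexity. Bounding $\int|\phi|^2|\psi|^2 \leq \|\phi\|_2^2\|\psi\|_\infty^2$ and using the one-dimensional Gagliardo--Nirenberg inequality $\|\psi\|_\infty^2 \leq C(\|\psi\|_2^2 + \|\psi\|_2\|\psi'\|_2)$ together with Young's inequality to absorb one piece of the cross term into $(1-\alpha)\|\psi'\|_2^2$, the pointwise condition $\mathrm{Hess}(H)\geq \alpha\,\mathrm{Hess}(H_0)$ reduces to the scalar comparison $\kappa\|\phi\|_2^{2(r-1)} \gtrsim c_1\lambda\|\phi\|_2^2 + c_2\lambda^2\|\phi\|_2^4$. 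Since $r>5$ gives $2(r-1)\geq 8>4$, this is automatic once $\|\phi\|_2$ exceeds a threshold $M_0(\lambda,\kappa,r)$ which grows only as a small power of $\lambda$; hence $W$ can be taken to vanish on $\{\|\phi\|_2 > M_0\}$.

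In the remaining compact region $\{\|\phi\|_2\leq M_0\}$ I introduce the orthogonal projection $\pi_N$ onto Fourier modes of frequency $\leq N$ and split $\psi = \pi_N\psi + (I-\pi_N)\psi$. The same Gagliardo--Nirenberg/Young argument, together with the Poincar\'e bound $\|(I-\pi_N)\psi\|_2^2 \leq N^{-2}\|(I-\pi_N)\psi'\|_2^2$, lets me choose $N$ polynomially large in $\lambda$ so that the bad term is absorbed into $(1-\alpha)\|(I-\pi_N)\psi'\|_2^2$ on high-mode directions, leaving only a low-mode deficit of size at most $D(\lambda)\|\pi_N\psi\|_2^2$ with $D(\lambda)$ polynomial in $\lambda$. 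I fill this deficit by taking $W(\phi) = f\bigl(\|\pi_N\phi\|_2^2\bigr)$ with $f(t) = A(1-e^{-t/B})$ and the scales $B \sim M_0^2$, $A/B \sim D(\lambda)$ so that the positive Hessian contribution $2f'\|\pi_N\psi\|_2^2$ dominates both the deficit and the concave cross term $4f''\langle\pi_N\phi,\pi_N\psi\rangle^2$ throughout $\{\|\pi_N\phi\|_2\leq M_0\}$. This yields $\|W\|_\infty = O(\lambda^{q(r)})$ for an explicit $q(r)$ with $q(r)\geq 2$ and $q(r)\to 2$ as $r\to\infty$.

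The main obstacle is precisely this engineering of $W$: balancing the conflicting requirements that $W$ be \emph{bounded} and that $\mathrm{Hess}(W)$ be \emph{uniformly positive} on a ball of $\lambda$-dependent radius in $\pi_N\K$, while carefully tracking how $M_0$, $N$, and $D(\lambda)$ all scale to pin down the correct exponent $q(r)$. Once $\|W\|_\infty$ is controlled, the three remaining ingredients --- the Nelson--Federbush--Gross log-Sobolev inequality for $\E_{\alpha H_0}$, the Feyel--\"Ust\"unel comparison (applicable by the convexity of $(H+W)-\alpha H_0$), and the Holley--Stroock perturbation lemma (applicable because $H = (H+W)-W$ with $W$ bounded) --- combine to give the logarithmic Sobolev inequality for $\E$ with the claimed exponential dependence on $\lambda$.
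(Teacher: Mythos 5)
Your proposal follows the paper's proof essentially verbatim: the same two-step comparison (Feyel--\"Ust\"unel applied to the convex pair $(\alpha H_0, H+W)$, then Holley--Stroock for the bounded perturbation $W$), the same identification of $3\lambda\int_{\T}|\phi|^2|\eta|^2$ as the sole source of non-convexity, the same low/high Fourier-mode decomposition with a Poincar\'e bound making the high-mode contribution harmless for large $N$, and the same construction of $W$ as a bounded smooth function of $\|P_n\phi\|_2^2$. The only differences are technical choices of no consequence --- you use the endpoint Gagliardo--Nirenberg inequality where the paper uses a fractional Sobolev bound $\|\psi\|_\infty \leq C_{a,\gamma}L^{2\gamma-1/2}\|A_a^\gamma\psi\|_2$ with $\gamma>1/4$, and you take an exponential cutoff $W=A(1-e^{-\|P_n\phi\|_2^2/B})$ where the paper uses $W(\phi)=\frac{c}{2}\|P_n\phi\|_2^2\,\chi_R(\|P_n\phi\|_2^2)$ --- both yielding the same claimed scaling of $\|W\|_\infty$ in $\lambda$.
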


The Holley-Stroock Lemma has been used for related models by Gordon Blower  \cite{Bl}; see also \cite{BBD}. Combining this Theorem with the  results of Caffarelli, Feyel and  \"Ust\"unel we are able to carry out a convexity comparison directly in the infinite-dimensional setting and to avoid sharp cut-offs or finite-dimensional approximations.

\section{Convexity comparison}

In this section we estimate the Hessians of the various terms in the Hamiltonian $H_{\lambda,\kappa}$. The term that has the potential to spoil the convexity is the interaction term $-\frac{\lambda}{p}\|\phi\|_p^p$, which is concave.   To avoid complicated remainder terms, we specialize to the case $p=4$ and define 
\begin{equation}\label{V1forf}
V_1(\phi) = \frac{1}{4}\int_{\T }|\phi(x)|^4{\rm d}x\ .
\end{equation}
Given two complex numbers, $z$ and $w$, we let $\theta\in [0,2\pi)$ be such that $\Re{\overline{z}w} = \cos(\theta)|z||w|$.  The function $t\mapsto t^{2}$ is convex on $[0,\infty)$, and hence $t\mapsto (|z|^2+|w|^2+ 2|z||w|t)^{2}+ (|z|^2+|w|^2- 2|z||w|t)^{2}$ is an increasing function of $t$ on $[0,1]$. Therefore,
$$|z+w|^{4} + |z-w|^{4} \leq ||z|+|w||^4 +  ||z|-|w||^4\ .$$
It follows from this inequality and the fact that $V_1$ is convex that 
$$0 \leq \frac12[V_1(\phi+\eta) + V_1(\phi-\eta)]  - V_1(\phi) \leq  \frac12[V_1(|\phi|+|\eta|) + V_1(|\phi|-|\eta|)]  - V_1(\phi)\ ,
$$
and
\begin{eqnarray}\label{V1dif}
\frac12[V_1(|\phi|+|\eta|) + V_1(|\phi|-|\eta|)]  - V_1(\phi)  &=& \int_{\T}  (3|\phi|^2|\eta|^2  + \frac14|\eta|^4){\rm d}x  \\
&=&   3\int_{\T}|\phi|^2|\eta|^2{\rm d}x  + \frac14 \|\eta\|_4^4\ .
\end{eqnarray}
Therefore, for any Hilbert space $\H\subset L^2$ with the property that $ \|\eta\|_4^4 = o(\|\eta\|_\H^2)$, as $\Vert \eta \Vert_{\mathcal{H}} \searrow 0$, the Hessian of $V_1$ at $\phi\in \H$,
${\rm Hess}_{V_1}(\phi)$, satisfies
\begin{equation}\label{bound00}
0 \leq \langle \eta, {\rm Hess}_{V_1}(\phi) \eta\rangle_\H  \leq  3\int_{\T} |\phi(x)|^2|\eta(x)|^2{\rm d}x\ .
\end{equation}
Note that
\begin{equation}\label{bound0}
\int_{\T}  |\phi|^2|\eta|^2{\rm d}x  \leq \|\phi\|_2^2\|\eta\|_\infty^2\ .
\end{equation}
We shall estimate $\|\eta\|_\infty$ in terms of $H_0(\eta) = m^2\|\eta\|_2^2 + \|\eta'\|_2^2$. We must, however, retain a piece of the term $m^2\|\eta\|_2^2$ in $H_0(\eta)$ for later use. Therefore, for $a>0$, we define an operator $A_a$ as 
\begin{equation}\label{Adef}
A _a := \frac{a^2}{L^2} - \Delta\ .
\end{equation}
Then
\begin{equation}\label{Adef2}
H_0(\eta) = \langle \eta, \left(m^2-\frac{a^2}{L^2}\right)\eta\rangle  + \langle \eta,A_a \eta \rangle\ .
\end{equation}

To control $\|\eta\|_\infty$, we use the following simple Sobolev embedding lemma:

\begin{lm}[Sobolev Embedding]\label{SobEm} For all $a> 0$ and all $\gamma > 1/4$, there is a universal constant $C_{a,\gamma}$  such that, for all 
functions $\psi$ on the torus in the domain of the operator $(-\Delta)^{\gamma}$,
\begin{equation}\label{twointerp2A}
\|\psi\|_\infty \leq C_{a,\gamma} L^{2\gamma-1/2}
\|A_a^\gamma \psi\|_2 \ .
\end{equation}
\end{lm}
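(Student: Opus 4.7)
The plan is to expand $\psi$ in a Fourier series on the circle of circumference $L$ and reduce the bound on $\|\psi\|_\infty$ to a Cauchy--Schwarz estimate on Fourier coefficients, using that $A_a^\gamma$ acts diagonally in the Fourier basis. The condition $\gamma>1/4$ will enter through convergence of the resulting zeta-type sum.

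First, write $\psi(x)=\sum_{n\in\Z}c_n e^{2\pi inx/L}$ with $c_n=L^{-1}\int_0^L\psi(x)e^{-2\pi inx/L}\,\dd x$, so that $\|\psi\|_2^2=L\sum_n|c_n|^2$. The operator $A_a=a^2/L^2-\Delta$ has eigenvalues $(a^2+4\pi^2 n^2)/L^2$ on $e^{2\pi inx/L}$, hence
$$\|A_a^\gamma\psi\|_2^2 = L^{1-4\gamma}\sum_{n\in\Z}|c_n|^2\bigl(a^2+4\pi^2 n^2\bigr)^{2\gamma}.$$
Next, control the sup norm by the absolute sum of Fourier coefficients, $\|\psi\|_\infty\le\sum_n|c_n|$, and apply Cauchy--Schwarz against the weight $(a^2+4\pi^2n^2)^{2\gamma}$:
$$\sum_{n\in\Z}|c_n| \le \left(\sum_{n\in\Z}\bigl(a^2+4\pi^2 n^2\bigr)^{-2\gamma}\right)^{1/2}\!\left(\sum_{n\in\Z}|c_n|^2\bigl(a^2+4\pi^2 n^2\bigr)^{2\gamma}\right)^{1/2}.$$
Substituting the identity for $\|A_a^\gamma\psi\|_2^2$ converts the right factor into $L^{(4\gamma-1)/2}\|A_a^\gamma\psi\|_2$, giving $\|\psi\|_\infty\le C_{a,\gamma} L^{2\gamma-1/2}\|A_a^\gamma\psi\|_2$ with
$$C_{a,\gamma}^2 = \sum_{n\in\Z}\bigl(a^2+4\pi^2 n^2\bigr)^{-2\gamma}.$$

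The only thing to verify is convergence of this series, which is where the hypothesis $\gamma>1/4$ is used: the tail behaves like $\sum |n|^{-4\gamma}$, finite precisely when $4\gamma>1$. This is really the only nontrivial point, and it is immediate; there is no genuine obstacle, the lemma is a clean packaging of the standard Fourier/Cauchy--Schwarz Sobolev embedding on the circle, made explicit in the dependence on $L$ and tailored to the operator $A_a$ that was split off from $H_0$ in \eqref{Adef2}. The power $L^{2\gamma-1/2}$ is precisely what one obtains by tracking the scaling of the eigenvalues $(a^2+4\pi^2 n^2)/L^2$ through the $\gamma$-th power.
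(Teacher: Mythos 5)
Your proof is correct and is essentially identical to the paper's: both expand $\psi$ in Fourier series, diagonalize $A_a$, and apply Cauchy--Schwarz with the weight $(a^2+4\pi^2 n^2)^{2\gamma}$, arriving at the same constant $C_{a,\gamma}^2=\sum_{n\in\Z}(a^2+4\pi^2 n^2)^{-2\gamma}$. Your version merely spells out the intermediate bound $\|\psi\|_\infty\le\sum_n|c_n|$ and the bookkeeping of the $L$-powers slightly more explicitly.
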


\begin{proof}  We write $\psi(x)$ as a Fourier series:
$$\psi(x) = L^{-1/2}\sum_{k\in \Z} \widehat\psi(k) e^{2\pi i kx/L} = 
L^{2\gamma-1/2}\sum_{k\in \Z} \widehat\psi(k)\left(\frac{a^2+ (2\pi k)^2}{L^2}\right)^\gamma e^{2\pi i kx/L} (a^2+(2\pi k)^2)^{-\gamma} $$
Applying the Cauchy-Schwarz inequality yields (\ref{twointerp2A}) with 
$$C^2_{a,\gamma} = \sum_{k\in \Z} (a^2+(2\pi k)^2)^{-2\gamma}\ .$$
\end{proof}

We define  $P_n$ to  be the projector onto the span of the functions $\{ e^{-i2\pi kx/L}\ :\  -n \leq k \leq n\}$ in $L^{2}$. In what follows a decomposition into low-frequency  and high-frequency modes is crucial. Since $P_n$ commutes with  any power of 
$A_a$, we have that
\begin{equation}\label{nsplit}
\|A_a^\gamma \psi\|_2^2 = \|A_a^\gamma P_n\psi\|_2^2 + \|A_a^\gamma P_n^\perp \psi\|_2^2 \ .
\end{equation}
The next lemma is the key to much of what follows afterwards.

\begin{lm}\label{split}  For all $\psi \in H_1(\T)$,  all $a>0$, $\gamma > 1/4$ and $\epsilon>0$ such that $\gamma+\epsilon< 1/2$, 
and all $n\in \mathbb{N}$,
\begin{equation}\label{twointerp2}
\|\psi\|_\infty^2  \leq C_{a,\gamma} L^{4\gamma-1} \left(\|P_n\psi\|_2^{2-4\gamma}\|P_nA_a^{1/2}\psi\|_2^{4\gamma}
+ \frac{1}{(2\pi n/L)^{4\epsilon}}
\|P_n^\perp\psi\|_2^{2-4(\gamma+ \epsilon)}\|P_n^\perp A_a^{1/2}\psi\|_2^{4(\gamma+\epsilon)} \right) 
\end{equation}
where $C_{a,\gamma}$ is the constant specified in Lemma~\ref{SobEm}.
\end{lm}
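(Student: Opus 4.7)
The approach is to apply the Sobolev embedding of Lemma~\ref{SobEm} to the full function $\psi$, then use the orthogonal decomposition \eqref{nsplit} to split the right-hand side into low- and high-frequency pieces, and estimate each piece by a spectral-theorem interpolation. Squaring \eqref{twointerp2A} gives
$$\|\psi\|_\infty^2 \leq C_{a,\gamma}^2\, L^{4\gamma-1}\bigl(\|A_a^\gamma P_n\psi\|_2^2 + \|A_a^\gamma P_n^\perp \psi\|_2^2\bigr),$$
so the problem reduces to bounding $\|A_a^\gamma P_n\psi\|_2^2$ and $\|A_a^\gamma P_n^\perp\psi\|_2^2$ in terms of $\|\cdot\|_2$ and $\|A_a^{1/2}\cdot\|_2$. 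The extra factor from squaring is absorbed into the constant $C_{a,\gamma}$ appearing in \eqref{twointerp2}.

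For the low-frequency piece, the hypothesis $\gamma<1/2$ permits H\"older's inequality with exponents $1/(1-2\gamma)$ and $1/(2\gamma)$, applied to the spectral measure of the non-negative self-adjoint operator $A_a$:
$$\|A_a^\gamma P_n\psi\|_2^2 = \int \lambda^{2\gamma}\,d\|E_\lambda P_n\psi\|^2 \leq \|P_n\psi\|_2^{2-4\gamma}\,\|A_a^{1/2} P_n\psi\|_2^{4\gamma}.$$
Since $P_n$ commutes with $A_a^{1/2}$, writing $A_a^{1/2}P_n = P_n A_a^{1/2}$ gives exactly the first summand in \eqref{twointerp2}.

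For the high-frequency piece the goal is to gain the advertised factor $(2\pi n/L)^{-4\epsilon}$ by peeling off a negative power of $A_a$ as an operator bound. On $\mathrm{ran}(P_n^\perp)$ the spectrum of $A_a = a^2/L^2 - \Delta$ is bounded below by $(2\pi n/L)^2$, because $a^2/L^2 + (2\pi k/L)^2 \geq (2\pi k/L)^2 \geq (2\pi n/L)^2$ for $|k|>n$; hence the operator norm of $A_a^{-\epsilon}P_n^\perp$ is at most $(2\pi n/L)^{-2\epsilon}$. Factoring $A_a^\gamma = A_a^{-\epsilon}A_a^{\gamma+\epsilon}$ and applying the same spectral H\"older interpolation with the exponent $\gamma+\epsilon$ in place of $\gamma$, which is permitted by the assumption $\gamma+\epsilon<1/2$, yields
$$\|A_a^\gamma P_n^\perp\psi\|_2^2 \leq (2\pi n/L)^{-4\epsilon}\,\|P_n^\perp\psi\|_2^{2-4(\gamma+\epsilon)}\,\|P_n^\perp A_a^{1/2}\psi\|_2^{4(\gamma+\epsilon)}.$$

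Inserting both bounds into the decomposition of $\|A_a^\gamma\psi\|_2^2$ gives \eqref{twointerp2}. There is no genuine obstacle; the proof is essentially bookkeeping. The only points that call for some care are verifying that the correct spectral gap $(2\pi n/L)^2$ is used in bounding $\|A_a^{-\epsilon}P_n^\perp\|_{\mathrm{op}}$ (so that the resulting factor does not depend on $a$), and tracking the two places where the H\"older interpolation requires $\gamma<1/2$ and $\gamma+\epsilon<1/2$ respectively, which explains the stated hypotheses.
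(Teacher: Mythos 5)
Your proof is correct and follows essentially the same route as the paper's: Sobolev embedding, frequency decomposition via $P_n$ and $P_n^\perp$, the spectral interpolation $\|A_a^{\gamma'}\chi\|_2^2\le\|\chi\|_2^{2-4\gamma'}\|A_a^{1/2}\chi\|_2^{4\gamma'}$ applied to each piece, and the spectral lower bound for $A_a$ on $\mathrm{ran}(P_n^\perp)$ to produce the factor $(2\pi n/L)^{-4\epsilon}$. The only cosmetic differences are that the paper derives the interpolation inequality by an arithmetic--geometric mean argument on the Fourier symbol rather than by H\"older on the spectral measure, and that it interpolates the high-frequency piece first at exponent $\gamma$ and only afterwards trades $\|P_n^\perp A_a^{1/2}\psi\|_2^{4\epsilon}$ for $(2\pi n/L)^{4\epsilon}\|P_n^\perp\psi\|_2^{4\epsilon}$, whereas you factor out $A_a^{-\epsilon}$ as an operator bound and interpolate directly at exponent $\gamma+\epsilon$; both need exactly the stated hypothesis $\gamma+\epsilon<1/2$. (Incidentally, your version implicitly corrects a misprint in the paper, which displays the spectral lower bound as $\|P_n^\perp A_a^{1/2}\psi\|_2^2\ge(2\pi n/L)^{-2}\|P_n^\perp\psi\|_2^2$ rather than $\ge(2\pi n/L)^{2}\|P_n^\perp\psi\|_2^2$.)
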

We set
\begin{equation}\label{bound00A}
S_1(\eta) := 3C_{a,\gamma}^2 L^{4\gamma-1} \|P_n\eta\|_2^{2-4\gamma}\|P_n A_a^{1/2}\eta\|_2^{4\gamma}
\end{equation}
and 
\begin{equation}\label{bound00B}
S_2(\eta) := 3C_{a,\gamma}^2L^{4\gamma -1}  \frac{1}{(2\pi n/L)^{4\epsilon}}
\|P_n^\perp\eta\|_2^{2-4(\gamma+ \epsilon)}\|P_n^\perp A_a^{1/2}\eta\|_2^{4(\gamma+\epsilon)} \ .
\end{equation}
Combining Lemma~\ref{split} with (\ref{bound00}) and (\ref{bound0}), we obtain the bound
\begin{equation}\label{bound00B}
0 \leq \langle \eta, {\rm Hess}_{V_1}(\phi) \eta\rangle_\H  \leq  \|\phi\|_2^2 S_1(\eta)  + \|\phi\|_2^2 S_2(\eta) \ .
\end{equation}
The merit of this bound is  that the exponents of the derivative terms in $S_1(\eta)$ and $S_2(\eta)$, $\|P_n A_a^{1/2}\eta\|_2$ and 
$\|P_n^\perp A_a^{1/2}\eta\|_2$, respectively, are both less than two, allowing one to control these terms with the help of the contribution from $H_0(\eta)$. 
Moreover,  by choosing $n$ sufficiently large, one can make the constant factor 
${\displaystyle  \frac{3C_{a,\gamma}}{(2\pi n/L)^{4\epsilon}}}$ as small as one may wish, while $S_1(\phi)$ depends on $\phi$ only through finitely many modes. We shall exploit this fact to quantitatively bound the log-Sobolev constant, and hence the spectral gap, for arbitrarily large values of the coupling constant $\lambda$.

\begin{proof}[Proof of Lemma~\ref{split}] By (\ref{twointerp2A})
$$\|\psi\|_\infty^2  \leq C_{a,\gamma}^2 L^{4\gamma-1}\|A_a^\gamma \psi\|_2^2 = 
C_{a,\gamma}^2 L^{4\gamma-1} \langle \psi, A_a^{2\gamma} \psi\rangle\ .$$
Since 
$$((a^2+ (2\pi k)^2)/L^2)^{2\gamma} =  (t^{1/(1-2\gamma)})^{1-2\gamma} (t^{-1/2\gamma}(a^2+ (2\pi k)^2)/L^2)^{2\gamma}, $$
the arithmetic-geometric mean inequality yields
$$\|A_a^\gamma \psi\|_2^2  \leq  (1-2\gamma)t^{1/(1-2\gamma)} \|\psi\|_2^2 + 2\gamma t^{-1/2\gamma}\|A_a^{1/2}\psi\|_2^2\ .$$
Choosing $t$ to minimize the right side, we obtain
the interpolation inequality
\begin{equation}\label{twointerp2B}
\|A_a^\gamma \psi\|^2 \leq \|\psi\|_2^{2-4\gamma}\|A_a^{1/2}\psi\|_2^{4\gamma}\ .
\end{equation}
Applying this inequality to each of the two terms on the right side of (\ref{nsplit}) yields
\begin{equation}\label{twointerp}
\|A_a^\gamma \psi\|^2 \leq \|P_n\psi\|_2^{2-4\gamma}\|P_nA_a^{1/2}\psi\|_2^{4\gamma}
+ \|P_n^\perp\psi\|_2^{2-4\gamma}\|P_n^\perp A_a^{1/2}\psi\|_2^{4\gamma}\ .
\end{equation}
Combining \eqref{twointerp} with (\ref{twointerp2A}), we obtain that
\begin{equation}\label{twointerp2}
\|\psi\|_\infty^2  \leq C_{a,\gamma}  \left(\|P_n\psi\|_2^{2-4\gamma}\|P_nA_a^{1/2}\psi\|_2^{4\gamma}
+ \|P_n^\perp\psi\|_2^{2-4\gamma}\|P_n^\perp A_a^{1/2}\psi\|_2^{4\gamma}\right) \ .
\end{equation}
Since
${\displaystyle \|P_n^\perp A_a^{1/2}\psi\|_2^2 \geq \frac{1}{(2\pi n/L)^2} \|P_n^\perp\psi\|_2^2}$, 
\begin{equation*}
\|P_n^\perp A_a^{1/2}\psi\|_2^{4\gamma} \leq \frac{1}{(2\pi n/L)^{4\epsilon}}
\|P_n^\perp\psi\|_2^{-4\epsilon}\|P_n^\perp A_a^{1/2}\psi\|_2^{4(\gamma+\epsilon)} ,
\end{equation*}
and combining this bound with (\ref{twointerp2}) completes the proof. 
\end{proof}

The remaining terms in the Hamiltonian $H_{\lambda,\kappa}$ are much simpler to treat.  For $r\geq 1$, we define
\begin{equation}\label{V2forf}
V_2(\phi) = \frac{1}{2r}\|\phi\|_{2}^{2r}\ .
\end{equation}

\begin{lm}\label{nosplit}
\begin{equation}\label{V2bnd}
\frac12  [V_2(\phi+\eta) + V_2(\phi-\eta)] - V_2(\phi) \geq  \|\phi\|_2^{2r-2}\|\eta\|_2^2\ .
\end{equation}
\end{lm}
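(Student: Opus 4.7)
My approach is to combine the parallelogram identity with two applications of the convexity of the scalar function $x \mapsto x^r$ on $[0,\infty)$, which holds because $r \geq 1$. The key observation is that $V_2$ depends on $\phi$ only through the scalar $\|\phi\|_2^2$, so the estimate reduces to a one-dimensional convexity inequality about the function $x \mapsto (2r)^{-1}x^r$ together with the parallelogram law applied in $L^2(\T)$. Nothing about the infinite-dimensional structure of $\K$ is really used.

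First, I would apply the parallelogram identity in $L^2(\T)$,
$$\|\phi+\eta\|_2^2 + \|\phi-\eta\|_2^2 \,=\, 2\|\phi\|_2^2 + 2\|\eta\|_2^2,$$
and then invoke Jensen's inequality for the convex function $x \mapsto x^r$ applied to the pair of nonnegative values $\|\phi\pm\eta\|_2^2$. This converts the average of the $2r$-th powers into the $r$-th power of the average, giving
$$\tfrac12\bigl(\|\phi+\eta\|_2^{2r} + \|\phi-\eta\|_2^{2r}\bigr) \,\geq\, \bigl(\|\phi\|_2^2 + \|\eta\|_2^2\bigr)^r,$$
so that $\tfrac12[V_2(\phi+\eta)+V_2(\phi-\eta)] \geq (2r)^{-1}(\|\phi\|_2^2+\|\eta\|_2^2)^r$ after dividing by $2r$.

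Second, I would use the tangent-line inequality for the same convex function $x \mapsto x^r$ at the point $x = \|\phi\|_2^2$,
$$\bigl(\|\phi\|_2^2+\|\eta\|_2^2\bigr)^r \,\geq\, \|\phi\|_2^{2r} + r\,\|\phi\|_2^{2r-2}\,\|\eta\|_2^2,$$
and after dividing by $2r$ and subtracting $V_2(\phi)=\tfrac{1}{2r}\|\phi\|_2^{2r}$ from both sides I extract a clean lower bound proportional to $\|\phi\|_2^{2r-2}\|\eta\|_2^2$, matching the stated form.

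This is a direct chain of elementary convexity inequalities and there is no substantive obstacle; if one wanted a sharper constant one could alternatively integrate the Hessian $D^2V_2(\phi+t\eta)[\eta,\eta]$ in $t$ over $[-1,1]$, where the second term of the Hessian already provides the $\|\phi\|_2^{2r-2}\|\eta\|_2^2$ contribution and the first (nonnegative) term can be discarded, but the Jensen-plus-tangent-line route above is shorter and makes no use of real-part cross terms.
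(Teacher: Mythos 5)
Your proof is correct and follows essentially the same route as the paper's: parallelogram law in $L^2(\T)$, convexity (Jensen) of $x\mapsto x^r$, then the tangent-line (first-order Taylor) bound for the same convex function at $\|\phi\|_2^2$. Note that both your chain and the paper's actually yield the lower bound $\tfrac12\|\phi\|_2^{2r-2}\|\eta\|_2^2$ rather than $\|\phi\|_2^{2r-2}\|\eta\|_2^2$ as stated in \eqref{V2bnd}; this harmless constant discrepancy is in the paper itself, and your hedge "proportional to" correctly reflects what the argument delivers.
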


\begin{proof}  By the convexity of the $r$th power, for $r\geq 1$, and the parallelogram law, 
$$\frac12 [  \left( \|\phi+\eta\|_2^2 \right)^r +  \left(\|\phi-\eta \|_2^2 \right)^r] \geq 
 \left( \frac12 \left[\|\phi+\eta\|_2^2  + \|\phi-\eta\|_2^2\right]\right)^r = 
 \left(  \|\phi\|_2^2  + \|\eta\|_2^2\right)^r\ .$$
Applying the inequality $f(t+s) \geq f(s) + f'(s)t$, valid for any differentiable convex function, to the function $f(t) = t^p$, we conclude that
$$\left(  \|\phi\|_2^2  + \|\eta\|_2^2\right)^r \geq   \|\phi\|_2^{2r} +  r \|\phi\|_2^{2r-2}  \|\eta\|_2^2\ ,$$
which completes the proof.
\end{proof}

The only remaining term in the Hamiltonian $H_{\lambda,\kappa}$  is the free Hamiltonian, \mbox{$H_0(\phi) = \langle \phi, (m^2-\Delta)\phi\rangle$}, which is quadratic in $\phi$ and positive. Hence, by the  parallelogram law and the definition of $A_a$, \eqref{Adef},
\begin{eqnarray}\label{freecon}
\frac12[ H_{0}(\phi+\eta) +  H_{0}(\phi-\eta)] - H_{0}(\phi)  &=& H_0(\eta)\nonumber\\
&=&  \langle \eta , (m^2-\Delta)\eta\rangle\nonumber\\
&=& \left(m^2 - \frac{a^2}{L^2}\right)\|\eta\|_2^2 + \|A_a^{1/2}\eta\|_2^2 \nonumber\\
&=& m_a^2\|\eta\|_2^2 +   \|A_a^{1/2}\eta\|_2^2\ ,
\end{eqnarray}
where
\begin{equation}\label{madef}
m^2_a :=  m^2 - \frac{a^2}{L^2}\ .
\end{equation}

Combing the estimates in (\ref{bound00B}), (\ref{V2bnd}) and (\ref{freecon}), we obtain that
\begin{equation}\label{bound3}
\langle \eta, {\rm Hess}_{H_{\lambda,\kappa}}(\phi)  \eta\rangle_\H  \geq
m^2\|\eta\|_2^2+ \|\eta'\|_2^2  - \lambda \|\phi\|^2 [S_1(\eta) + S_2(\eta)] + \kappa \|\phi\|_2^{2r-2}\|\eta\|_2^2\ .
\end{equation}
Therefore, 
for any $\alpha \in (0,1)$,
$$\langle \eta, {\rm Hess}_{H_{\lambda,\kappa}}(\phi) \eta\rangle_\H   - \alpha \langle \eta, {\rm Hess}_{H_0}(\phi)\eta\rangle_\H$$ is bounded below by the sum of
\begin{equation}\label{hessc1}
(1-\alpha) [m^2_a\|P_n\eta\|_2^2 + \|P_n A_a^{1/2}\eta\|_2^2]  - \lambda \|\phi\|^2 S_1(\eta) + \kappa \|\phi\|_2^{2r-2}\|P_n\eta\|_2^2
\end{equation}
and 
\begin{equation}\label{hessc2}
(1-\alpha) [m^2_a\|P_n^\perp\eta\|_2^2+ \|P_n^\perp A_a^{1/2}\eta\|_2^2]  - \lambda \|\phi\|^2  S_2(\eta) + \kappa \|\phi\|_2^{2r-2}\|P_n^\perp\eta\|_2^2\,,
\end{equation}
which we estimate separately, beginning with (\ref{hessc1}). We choose $\alpha\in (0,1)$, and we define
 $t :=  \|P_n A_a^{1/2}\eta\|_2$ and $M := 3\lambda C^2_{a,\gamma} L^{4\gamma-1}\|\phi\|_2^2  \|P_n\eta\|_2^{2-4\gamma}$. We then have that
$$(1-\alpha)\|P_n A_a^{1/2}\eta\|_2^2  - \lambda \|\phi\|^2 S_1(\eta)   = (1-\alpha)t^2 - M  t^{4\gamma}\ .$$
Simple computations show
that  there is a constant $c_{\gamma,\alpha}$ depending only on $\alpha$ and $\gamma$ such that
$$(1-\alpha)t^2 - M t^{4\gamma} \geq -c_{\gamma,\alpha} M^{1/(1-2\gamma)}\,, \quad \forall t>0.$$
Using this inequality to eliminate $\|P_n A_a^{1/2}\eta\|_2^2$,
we obtain the following lower bound on the quantity in (\ref{hessc1}):
\begin{equation}\label{focus}
\left(
(1-\alpha) m^2_a      - \left( 3\lambda C_{a,\gamma}^2 L^{4\gamma-1}\right)^{1/(1-2\gamma)} 
\|\phi\|_2^{2/(1-2\gamma)}
+ \kappa r \|\phi\|_2^{2r-2} \right) \|P_n\eta\|_2^2\ .
\end{equation} 
For $r>  1+ 1/(1-2\gamma)$, let $s = r -(1/(1-2\gamma)) -1$.  Then, setting $t = \|\phi\|_2^{2/(1-2\gamma)}$,
we may write our lower bound as
\begin{equation}\label{lb}
\|\eta\|^2\left( (1-\alpha)m_a^2  - \left( 3\lambda C_{a,\gamma}^2 L^{4\gamma-1}\right)^{1/(1-2\gamma)}t+ \kappa r t^{(1-2\gamma)(r-1)} \right)\ .
\end{equation}
Recall that, below \eqref{fg}, we imposed the restriction $r > p+2/(6-p)$, which, for $p=4$, is implied by
$r > 5$.
We suppose that $(1-2\gamma)(r-1) > 1$, and, since $\gamma > 1/4$, this requires $\gamma$ to be very close to $1/4$ if $r$ is close to $5$; and,  no matter how large $r$ is, we require $\gamma < 1/2$. With $\gamma$  
 chosen as required, we define $q := (1-2\gamma)(r-1) -1$. For $b,c>0$, we have that
$$-ct + bt^{1+q} \geq -\frac{q}{1+q}\left(\frac{1}{(1+q)b}\right)^{1/q}c^{(q+1)/q}\ .$$
Setting
$$b :=\kappa r \quad \text{and   } \quad c := \left(3\lambda C_{a,\gamma}^2 L^{4\gamma-1}\right)^{1/(1-2\gamma)},$$
this inequality shows that the quantity in (\ref{lb}) is non-negative, provided that 
\begin{equation}\label{cchoice}
(1-\alpha)m_a^2   -\frac{q}{1+q}\left(\frac{1}{(1+q)\kappa r }\right)^{1/q}\left( 3\lambda C_{a,\gamma}^2 L^{4\gamma-1}\right)^{(q+1)/q(1-2\gamma)}\ 
\end{equation}
is non-negative, which is evidently satisfied if $\lambda$ is sufficiently small  or $\kappa$ is sufficiently large -- but only in these cases!  Note that the exponent $(q+1)/q(1-2\gamma)$ is at least as large as $2$, which it approaches when $r \uparrow \infty$ and $\gamma\downarrow 1/2$.

The situation is much better for the high-frequency modes. 
The same analysis 
shows that if $(1-2(\gamma+\epsilon))(r-1) > 1$, and for $q'$ defined by  
$q' := (1-2(\gamma+\epsilon))(r-1) -1$,
the quantity in (\ref{hessc2}) is non-negative, provided that
\begin{equation}\label{nchoice}
(1-\alpha)m_a^2   -\frac{q'}{1+q'}\left(\frac{1}{(1+q')\kappa r }\right)^{1/q'}
\left( 3\lambda C_{a,\gamma}^2L^{4\gamma -1}  \frac{1}{(2\pi n/L)^{4\epsilon}} 
\right)^{(q'+1)/q'(1-2(\gamma+\epsilon))} \geq 0\ .
\end{equation}
The exponent $(q'+1)/q'(1-2(\gamma+\epsilon))$ is always at least as large as $2$, which it approaches 
when $r \uparrow \infty$,  $\gamma\downarrow 1/2$ and $\epsilon\downarrow 0$.  

No matter how large $\lambda$ is or how small $\kappa$ is, the negative term can be made arbitrarily small by choosing $n$ sufficiently large. 
Thus, no matter how large the value of the coupling constant $\lambda$ may be, or how small $\kappa$ may be, there exists a finite $n\in \mathbb{N}$ such that the quantity in 
(\ref{hessc2}) is non-negative. 
For such a value of $n$, the failure of convexity only concerns the $2n+1$ lowest frequency modes. We may then compensate this failure by adding a {\em uniformly bounded} term, $W(\phi)$, to 
$H(\phi)$ that depends on
$\phi$ only through the $2n+1$ lowest-frequency modes, with the property that the Dirichlet form associated with the perturbed measure
$$\frac{1}{Z} e^{-\beta [H(\phi)+ W(\phi)]}\mathcal{D} \phi  \mathcal{D} \overline{\phi}$$
satisfies a logarithmic Sobolev inequality. As explained in the last section, one may then apply the Holley-Stroock  Lemma to show that the Dirichlet form for the unperturbed measure 
(\ref{fg}) satisfies a log-Sobolev inequality.

\subsection{The convexity-restoring perturbation}

We seek to add a bounded function $W(\phi)$ to $H_{\lambda,\kappa}(\phi)$ such that the sum of 
$\langle \eta, {\rm Hess}_W(\phi)\eta\rangle$ and the quantity in (\ref{focus}) is non-negative. If
$r>  1+ 1/(1-2\gamma)$ and if $\|\phi\|_2^2 > R$, for some sufficiently large $R$ depending on $\lambda$, the quantity in (\ref{focus}) is actually non-negative. We choose such a value of $R$. We are then left with analyzing the Hessian of $H_{\lambda, \kappa}(\phi)$ for
$\|\phi\|_2^2 \leq R$. Here, and only here, do we need help from $W(\phi)$. 

Let $\chi$ be a smooth non-negative cut-off function on $[0,\infty)$  bounded above by $1$, with the properties that  $\chi(t) =1$, for $t\leq 1$, $\chi(t) = 0$, for $t> 2$, and that $|\chi'(t)|,|\chi''(t)| < 5$, for all $t\in [0,\infty)$.   (One may set 
$\chi(t) := 1 - 30\int_1 ^t (1-x)^2(2-x)^2{\rm d}x$, for $1 < t < 2$.)  We then define
$\chi_R(t) = \chi(t/R)$, $R>0$. 

We choose the functional $W(\phi)$ to be given by
\begin{equation}\label{Wdef}
W(\phi) = \frac{c}{2} \left(\sum_{k=-n}^n|\hat\phi(k)|^2\right) \chi_R\left(\sum_{k=-n}^n|\hat\phi(k)|^2\right)\ ,
\end{equation}
where $c$ is a constant to be chosen later. Recall that 
$P_n$ denotes  the orthogonal  projection onto the span of the $\{ e^{-i2\pi kx/L}\ :\ -n \leq k \leq n\}$ in $L^{2}$. \\
By direct calculation,
\begin{eqnarray}\label{modHess}\langle \eta, {\rm Hess}_W(\phi)\eta\rangle   &=&  
c \chi_R\left(\|P_n\phi\|_2^2 \right)\|P_n\eta\|_2^2 \nonumber\\   
   &+ & c  g_1\left(\|P_n\phi\|_2^2\right) \|P_n\eta\|_2^2
+ c g_2\left(\|P_n\phi\|_2^2\right)|\langle \phi,P_n\eta\rangle|^2
 \end{eqnarray}
where $g_1(s) = s\chi_R'(s)$ and $g_2(s) = 2(2\chi_R'(s) + s\chi_R''(s))$.
Note that since $|g_1(s)| + s|g_2(s)| \leq 35$, for all $s$, 
\begin{equation}\label{rembnd}
|c  g_1\left (\|P_n\phi\|_2^2\right) \|P_n\eta\|_2^2
+ c g_2\left(\|P_n\phi\|_2^2\right)|\langle \phi,P_n\eta\rangle|^2| \leq 35 c\|P_n\eta\|_2^2\ .
\end{equation}

The parameters in $W$ are chosen as follows: The parameters $L$,  $m$,  $\kappa$ and $\lambda$  are given. We have already chosen a constant $a>0$ such that the quantity $m_a$, defined in \eqref{madef}, is positive. 
Next, we choose $\gamma\in (1/4,1/2)$ such that  $(1-2\gamma)(r-1) > 1$, and $\epsilon> 0$. This fixes the exponents $q$ and $q'$ in \eqref{cchoice} and \eqref{nchoice}, respectively. As we have noted, these exponents are at least as large as $2$.

\medskip

\noindent{\it (1)} If the quantity in  (\ref{cchoice}) is non-negative, we may choose $c=0$ and $n=\infty$. In this case 
$\lambda$ is so small and $\kappa$ is so large that there is no need to add the functional $W$. Otherwise, we choose $c$ to be minus the quantity in
  (\ref{cchoice}), for the chosen value of  $\gamma$.

\smallskip

\noindent{\it (2)}  Choose $\epsilon = (1-\gamma/2)$, then choose $n$ such that  (\ref{nchoice}) is satisfied for this choice of $\epsilon$. 

\smallskip

\noindent{\it (3)}   Choose $R$ so large that 
$$
 - \left( 3\lambda C_{a,\gamma}^2 L^{4\gamma-1}\right)^{1/(1-2\gamma)} 
R^{2/(1-2\gamma)}
+ \kappa r R^{2r-2} \geq 35c\ .
$$ 
To satisfy this bound when $\lambda$ is not small or when $L$ is large, one needs to choose $r$ such that $r-1 >  1/(1-2\gamma)$, which we have already assumed. 
Since the terms in the second line on the right side of (\ref{modHess}) are bounded by $35 c\|P_n\eta\|_2^2$ and vanish, unless $\|\phi\|_2 > R$, they can be absorbed into positive terms coming from the Hessian of $H_{\lambda, \kappa}$. 

\medskip

With this choice of parameters, we have that
\begin{equation}\label{hesscomp}
{\rm Hess}_{H_{\lambda,\kappa} + W}(\phi) \geq  \alpha {\rm Hess}_{H_0}(\phi) \ .
\end{equation}

\begin{remark}  The size of the constant in the log-Sobolev inequality, and hence the magnitude of the spectral gap will tend to zero exponentially fast in $\|W\|_\infty$. Therefore it is useful to pay attention to how 
$\|W\|_\infty$ depends on the  allowed choices of parameters. First, for given values of $\alpha$ and $L$, there is a constant $\lambda_0(\alpha,L)>0$ such that if $0<\lambda \leq \lambda_0(\alpha,L)$, the quantity in (\ref{cchoice}) is non-negative, and we may set $W = 0$. For large $\lambda$, our prescription yields
$$c = {\mathcal O}\left(\lambda^{\frac{(1-2\gamma)(r-1)}{(1-2\gamma)((1-2\gamma)(r-1)-1)}}\right)\ .$$
In the limit of large $r$, the exponent in this expression approaches $2$, but it is always larger than $2$. We must then choose $R:= {\mathcal O}(c^{1/(2r-2)})$. 
Since $cR/2 \leq \|W\|_\infty \leq cR$, for large $\lambda$, $\|W\|_\infty = \lambda^w$, for some $w>2$, but with $w$ approaching $2$ in the limit $r\to\infty$. 
The log-Sobolev constant and the spectral gap will thus be of order ${\mathcal O}(e^{-K\lambda^w})$, for some constant $K$. 

Finally, we observe that we could have defined $W$ without the projection $P_n$. While it is comforting that, in this problem, we only need help from $W$ for finitely many modes, this is not a necessary condition for the applicability of our strategy.
\end{remark}

\section{Application of the Holley-Stroock Lemma}

Let $(\Omega,\mathcal{F},\mu)$ be a probability space.  We define the functional
$${\rm Ent}_\mu(f)  = \int f\ln f {\rm d}\mu - \left(\int f {\rm d}\mu\right) \ln \left(\int f {\rm d}\mu\right) \ $$
on non-negative functions $f$, with $f(\ln f)_+$ integrable, and we define ${\rm Ent}_\mu(f) $ to be $+\infty$, elsewhere. 
Given a function $f\geq 0$, with $f \ln f$ integrable, we define the function $\varphi$ on $(0,\infty)$ by setting
$$\varphi(t)  = \int \left[ f \ln \left(\frac{f}{t}\right) + t - f\right]{\rm d}\mu\ .$$
Note that $\varphi$ is convex and continuously differentiable, and that
${\displaystyle \varphi'(t) =  -
t^{-1}\int f{\rm d}\mu + 1}$. Hence ${\displaystyle \varphi(t) \geq \varphi\left(\int f{\rm d}\mu\right)}  = {\rm Ent}_\mu(f)$,
for all $t\in (0,1)$.

It follows that, for all non-negative functions $f$ with the property that $f\ln f$ is integrable,
\begin{equation}\label{HS} 
{\rm Ent}_\mu(f)   = \inf_{t\in (0,\infty)} \int_\Omega \left[ f \ln \left(\frac{f}{t}\right) + t - f\right]{\rm d}\mu\ .
\end{equation}

This leads directly to the following lemma; in our applications, the quadratic fuction $\Gamma(f,f)$ in the lemma will be $\langle D F,\sigma^2 D F\rangle$. 

\begin{lm}[Holley-Stroock Lemma]  Let $(\Omega,{\mathcal F},\mu)$ be a probability space on which there is a densire subset
$\mathcal{D}$ of $L^2((\Omega,{\mathcal F},\mu)$ on which there is defined a real bilinear map 
$f\mapsto \Gamma(f,f)\in L^1(\Omega,{\mathcal F},\mu)$. Suppose further that $F\mapsto \int_\Omega \Gamma(f,f){\rm d}\mu$
is a Dirichlet from on $L^2((\Omega,{\mathcal F},\mu)$, and that  that the log-Sobolev inequality
$${\rm Ent}_\mu(f^2)   \leq c \int_\Omega \Gamma(f,f) {\rm d}\mu$$
is valid.  Let $V$ be a continuous function with finite oscillation,
$${\rm osc}(V) := \sup V - \inf V\ ,$$
and define a new probability measure $\widetilde \mu$ by
$\widetilde \mu = \frac1Z e^V \mu$.
Then the logarithmic Sobolev inequality
$${\rm Ent}_{\widetilde \mu}(f^2)   \leq c e^{{\rm osc}(V)} \int_\Omega\Gamma(f,f){\rm d}\widetilde \mu$$
is valid
\end{lm}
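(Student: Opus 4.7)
The approach I propose is the classical Holley--Stroock argument, which rests entirely on the variational identity \eqref{HS} together with the elementary observation that the integrand there is pointwise nonnegative. The plan proceeds in three steps: apply \eqref{HS} to $\widetilde\mu$; pull the density $e^V/Z$ out using its pointwise supremum; and then, after invoking the hypothesized log-Sobolev inequality for $\mu$, push the reciprocal density back in via its pointwise infimum. The exponential factors combine to give exactly $e^{\sup V - \inf V} = e^{{\rm osc}(V)}$, and the normalizing constants $Z$ cancel.

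Concretely, for $f \in \mathcal{D}$ with $f^2 \ln f^2$ integrable, I would first expand
$${\rm Ent}_{\widetilde\mu}(f^2) = \inf_{t>0} \int_\Omega \left[f^2 \ln(f^2/t) + t - f^2\right]{\rm d}\widetilde\mu = \inf_{t>0} \frac{1}{Z}\int_\Omega \left[f^2 \ln(f^2/t) + t - f^2\right] e^V\,{\rm d}\mu.$$
The decisive observation, already used in deriving \eqref{HS}, is that for every $t > 0$ the bracketed function of $g = f^2 \geq 0$ is nonnegative (it vanishes at $g = t$ and has nonnegative second derivative in $g$). This permits replacing $e^V$ by $e^{\sup V}$ inside the infimum without reversing the inequality, yielding
$${\rm Ent}_{\widetilde\mu}(f^2) \leq \frac{e^{\sup V}}{Z}\,{\rm Ent}_\mu(f^2) \leq \frac{c\,e^{\sup V}}{Z}\int_\Omega \Gamma(f,f)\,{\rm d}\mu,$$
where the second step is the log-Sobolev inequality for $\mu$.

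To convert the Dirichlet form on the right back to one against $\widetilde\mu$, I would substitute ${\rm d}\mu = Z e^{-V}\,{\rm d}\widetilde\mu$ and bound $e^{-V} \leq e^{-\inf V}$; the two factors of $Z$ cancel and the remaining exponentials combine to $c\,e^{{\rm osc}(V)}$, delivering the stated inequality. I do not anticipate any genuine obstacle here; the only step requiring minor care is justifying the use of the variational representation for the class of $f$ at hand and extending the resulting bound from $\mathcal{D}$ to its closure in the Dirichlet-form norm, which is routine. Finally, it is worth noting that the argument depends on $V$ only through ${\rm osc}(V)$ and not through $\|V\|_\infty$, which is precisely the robustness property needed for the bounded perturbation $W$ constructed in Section~3 (where one uses ${\rm osc}(W) \leq 2\|W\|_\infty$ to recover the $e^{-2\|W\|_\infty}$ factor mentioned in the introduction).
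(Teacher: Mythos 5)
Your proof is correct and follows essentially the same route as the paper: both rely on the variational representation \eqref{HS}, exploit the pointwise nonnegativity of the integrand $f^2\ln(f^2/t)+t-f^2$ to pull out $e^{\sup V}/Z$, apply the assumed log-Sobolev inequality for $\mu$, and then return to a Dirichlet form against $\widetilde\mu$ by bounding $e^{-V}\le e^{-\inf V}$, whereupon the $Z$'s cancel and the exponentials combine to $e^{{\rm osc}(V)}$. The small additional justification you supply for the nonnegativity of the integrand (convexity in $g=f^2$ with vanishing minimum at $g=t$) is a helpful explicitation of a step the paper takes for granted.
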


\begin{proof}   Note that $f^2(\ln f^2)_+$ is integrable with respect to $\mu$ if and only if it is integrable with respect to 
$\widetilde \mu$, so that ${\rm Ent}_{\widetilde \mu}$ and ${\rm Ent}_{\mu}$ have the same domain of definition. 
By (\ref{HS}), and since  the integrand is non-negative,
\begin{align*}{\rm Ent}_{\widetilde \mu}(f^2)   &= 
\inf_{t\in (0,\infty)} \int_\Omega \left[ f^{2} \ln \left(\frac{f^{2}}{t}\right) + t - f^{2}\right]\frac1Z e^{V}{\rm d}\mu\\
&\leq \frac1Z e^{\sup V}  \inf_{t\in (0,\infty)} \int_\Omega \left[ f^{2} \ln \left(\frac{f^{2}}{t}\right) + t - f^{2}\right]{\rm d}\mu =
 \frac1Z e^{\sup V} {\rm Ent}_{ \mu}(f^2)\ .
\end{align*}
Even more simply,
$$  \frac1Z e^{\sup V} \int \Gamma(f,f){\rm d}\mu  = e^{\sup V}\int_\Omega \Gamma(f,f) e^{-V} 
{\rm d}\widetilde\mu    \leq e^{{\rm osc} V} 
\int_\Omega \Gamma(f,f){\rm d}\widetilde \mu\ .$$
Combining these bounds completes the proof of the lemma.
\end{proof}

We apply this lemma with $\mu := \text{d}\mu_{\lambda,\kappa}$, as introduced in Eq. \eqref{fg}, and $\Gamma(f,f) := \langle D F,\sigma^2 D F\rangle$. Recall that
$$H_{\lambda,\kappa}(\phi)  =  H_0(\phi) + \kappa \|\phi\|_2^{2r}-   \frac{\lambda}{p} \|\phi\|_p^p\ .$$
To $H_{\lambda,\kappa}(\phi)$ we add the functional
\begin{equation}\label{Wdef}
W(\phi) = a (\sum_{k=-n}^n|\hat\phi(k)|^2) \chi_R(\sum_{k=-n}^n|\hat\phi(k)|^2)\ .
\end{equation}
Let $P_n$ be the projector onto the span of the $\{ e^{-i2\pi kx/L}\ :\ -n \leq k \leq n\}$. Then

$${\rm Hess}_{W}(\phi)  =  2a \chi_R(\sum_{k=-n}^n|\hat\phi(k)|^2) P_n + 
2a g_1(\sum_{k=-n}^n|\hat\phi(k)|^2) P_n  + 
4a g_2(\sum_{k=-n}^n|\hat\phi(k)|^2) | P_n\phi \rangle\langle P_n\phi|\ ,$$ 
where $g_1(s) = \chi_R'(s)$ and $g_2(s) = 2\chi_R'(s) + s\chi_R''(s)$.

To estimate the Hessian of $H_{\lambda,\kappa}(\phi)  + W(\phi)$, we return to \eqref{bound3} and 
make two changes: First, we add the additional terms
due to the inclusion of $W$. Second, we use the spectral decomposition to estimate
the term
$$
\lim_{t\to0}\frac{1}{t^2}3\lambda \left( \frac12[ \|\phi+t\eta\|_4^4 + \|\phi-t\eta\|_4^4] - \|\phi\|_4^4\right)
= 3\lambda \int |\phi|^2|\eta|^2\ .
$$
We  use Lemma~\ref{split} to show that
$$3\lambda \int |\phi|^2|\eta|^2  \leq  
 C_\gamma  \|P_n\psi\|_2^{2-4\gamma}\|P_n\psi'\|_2^{4\gamma}\|\phi\|_2^2 
+ C_\gamma \|P_n^\perp\psi\|_2^{2-4\gamma}\|P_n^\perp \psi'\|_2^{4\gamma}\|\phi\|_2^2\ .
$$
We require positivity of $S_1+ S_2$, where
\begin{eqnarray}\label{S1}
S_1(\eta) &:= &(1-\alpha)\frac12 \|P_n^\perp \eta'\|_2^2 + (1-\alpha)\frac{m}{2} \|\eta\|_2^2  - \nonumber \\
        &  & 3\lambda C_\gamma^2 \|P_n^\perp\eta\|_2^{2-4\gamma}\|P_n^\perp\eta'\|_2^{4\gamma} \|\phi\|_2^2 + \kappa r \|\phi\|_2^{2r-2}  \|P_n^\perp\eta\|_2^2 \ ,
\end{eqnarray}
and
\begin{multline}\label{bound4B}
S_2(\eta) := (1-\alpha)\frac12 \|P_n\eta'\|_2^2 + (1-\alpha)\frac{m}{2} \|P_n\eta\|_2^2  - 3\lambda C_\gamma^2 \|P_n\eta\|_2^{2-4\gamma}\|P_n\eta'\|_2^{4\gamma} \|\phi\|_2^2 + \kappa r \|\phi\|_2^{2r-2}  \|\eta\|_2^2 +\\
\hspace{0.2cm}
2a \chi_R(\sum_{k=-n}^n|\hat\phi(k)|^2) \|P_n\eta\|_2^2  + 2a g_1(\sum_{k=-n}^n|\hat\phi(k)|^2) \|P_n\eta\|^2
+ 4a g_2(\sum_{k=-n}^n|\hat\phi(k)|^2) | \langle P_n\phi,\eta\rangle|^2\ .
\end{multline}
It suffices to show that, for some $n$  and appropriate choices of the other parameters, $S_1$ and $S_2$ are positive. 

First, we consider $S_1$. Since
${\displaystyle \|P_n^\perp\eta'\|_2^2 \geq \frac{1}{(2\pi n/L)^2} \|P_n^\perp\eta\|_2^2}$, 
\begin{equation*}
\|P_n^\perp\eta'\|_2^{4\gamma} \leq \frac{1}{(2\pi n/L)^{4\epsilon}}
\|P_n^\perp\eta\|_2^{-4\epsilon}\|P_n^\perp\eta'\|_2^{4(\gamma+\epsilon)},
\end{equation*}
hence $S_1 \geq S_1{'}$, where 
\begin{multline}\label{S1}
S_1{'} := (1-\alpha)\frac12 \|P_n^\perp \eta'\|_2^2 + (1-\alpha)\frac{m}{2} \|\eta\|_2^2  -\\ 3\lambda C_\gamma^2 
\frac{1}{(2\pi n/L)^{4\epsilon}} \|P_n^\perp\eta\|_2^{2-4(\gamma-\epsilon)}\|P_n^\perp\eta'\|_2^{4(\gamma+\epsilon)} \|\phi\|_2^2 + \kappa r \|\phi\|_2^{2r-2}  \|P_n^\perp\eta\|_2^2 \ .
\end{multline}
Choosing $n$ sufficiently large, we can effectively make $\lambda$ arbitrarily small, and then positivity of $S_1{'}$ follows from our previous result. Turning to $S_2$, we observe that the inclusion of $W$ effectively makes the mass in $S_2$ arbitrarily large, and hence, once again, our previous analysis establishes the positivity of $S_2$.  Altogether, this completes the proof of the main theorem. 

\vspace{0.2cm}
\textit{Acknowledgements:} EA was partially supported by NSF grant DMS 1501007.   JF thanks the Institute for Advanced Study, Princeton, and, in particular, Thomas C. Spencer for splendid hospitality during a period when the work underlying this paper was begun. The work of JLL was supported in part by AFOSR grant FA9550-16-10037, and was carried out, in part, while he was visiting the Institute for Advanced Study.

\section {Appendix: Spectral Gap and Witten Laplacian} 

In this section, we briefly recapitulate a formulation of the problem of exhibiting a gap above the ground-state energy of our Hamiltonian in terms of the Witten Laplacian. The material reviewed here and in Section 5.1 is standard
and is similar to the contents of Section 7 in [14]. We add it here to fix our notations and for the convenience of the reader.\\ 
We start our review by considering systems with only finitely many degrees of freedom. 
It will turn out to be convenient to re-write our Hamiltonian in Fourier modes. For ease of exposition, we consider the cubic NLS, with $p=4$, and we set $\beta=m=1$.
The resulting Hamiltonian, denoted by $2\Phi$, is then given by
\begin{eqnarray} 2\Phi(a,\bar a)\label{ham2}
&=&\sum_{n\in\Bbb Z}(n^2+1)|a_n|^2-\frac{\lambda}{2}\sum_{n_1-n_2+n_{3}-n_{4}=0} a_{n_1}\bar a_{n_2} a_{n_{3}}\bar a_{n_{4}}  \nonumber\\
& &+\frac {\kappa}{r+1}(\sum_na_n\bar a_n)^{r+1}\,\end{eqnarray}
where $a_n, \bar a_n\in\Bbb C$.  
This definition differs from the one in (\ref{ham}) by a factor of $2$ and $r$ is replaced by $r+1$, which slightly simplifies some of the factors later on, in sect. 5.2.

Consider a truncated Hamiltonian, $\Phi_{N}$, instead of
$\Phi$, which we define to be given by
$$\Phi_N=\Phi|a_n=0, \bar a_n=0,   \quad \text{  for   } |n|>N.$$  
For simplicity of notation, we drop the subscript $N$ below. 
Up to a normalization factor, the truncated Gibbs measure takes the form 
\begin{equation}
\label {mu}
\mu \propto e^{-2\Phi}\prod_n da_n d\bar a_n.
\end{equation}
When identifing $\mathbb C^{2N+1}$ with $\mathbb R^{2(2N+1)}$, $\mu$ is a probability measure on 
$\mathbb R^{2(2N+1)}$.
We will show that, up to a normalization constant, $e^{-\Phi}$ is the {\it unique} ground-state 
of a certain Schr\"odinger operator, which is, in fact, the generator of a diffusion process, (cf. $\mathcal L$ in sect. 1).
To provide precise ideas, we need to engage on a short digression and introduce some notions and notations.

\subsection {Some elements of differential calculus on $\mathbb{R}^{N}$}

In this section, we review some basic elements of differential calculus on $\mathbb{R}^{N}$. We equip $\mathbb{R}^{N}$ with the standard euclidian metric, $(\delta_{ij})_{i,j=1}^{N}$. Let $\phi$ be a smooth real-valued function on $\mathbb{R}^{N}$, i.e., \mbox{$\phi\in C^\infty(\Bbb R^{N}; \Bbb R)$.} Let $d$ be the usual exterior differentiation
$$d=\sum^N_{j=1}dx^j{\wedge}\partial_{x_{j}}(\cdot), $$
 and 
$$d_{\phi}=e^{-\phi}  d e^{\phi}=d+d\phi{\wedge}=\sum_{j=1}^Ndx^j{\wedge}z_j(\cdot),$$ 
where
$$z_j=\frac {\partial}{\partial x^j}+\frac{\partial\phi}{\partial x^j}.$$ 
For details concerning differential calculus, see for example \cite{Sp}.

If $f$ is a form of degree $m$, then $d_\phi f$ is a form of
degree $m+1$. For example, if $f$ is a $0$-form, i.e., a scalar function in $C^\infty(\Bbb R^N; \Bbb R)$, 
then
$$d_\phi f=\sum_{j=1}^N z_j (f) dx^j$$
is a $1$-form, which we may identify with a covariant vector-valued function, $F$, with components
$$F_j(x)=z_j (f) (x),$$
which are functions in $C^\infty(\Bbb R^{N}; \Bbb R^N)$.
We note that if $\phi=0$ then $d_\phi f=df$, which is just the usual differential of $f$.
If $f$ is a $1$-form, $f=\sum_j f_j dx^j$, then 
$$d_\phi f=\sum_{i<j} z_i(f_j) dx^i\wedge dx^j$$
is a $2$-form, which we may identify with an $N\times N$ antisymmetric matrix function, $M$, with matrix elements
$$M_{ij}(x)=-M_{ji}(x)=z_i(f_j) (x),$$ i.e., $M$ is a function in $C^\infty(\Bbb R^{N}; \Bbb R^N\wedge\Bbb R^N)$.
In view of its action on $e^{-\phi}$, the operator $z_j$ can be interpreted as an ``annihilation operator'': 
$$z_j e^{-\phi}=0, \quad \text{ for } j=1, 2, ...,N. $$ 

The space of $m$-forms, $m=1,\dots, N$, can be equipped with an $L^2$- scalar product:
For two $m$-forms, $\omega$ and $\nu$, the scalar product, $(\omega, \nu)$, is defined by
\begin{equation}\label{scalarprod}
(\omega, \nu):= \int \omega \wedge {*}\nu,
\end{equation}
where ${*}$ is the Hodge *-operation, (which involves the metric $(\delta_{ij})$ on $\mathbb{R}^{N}$).
Choosing $\nu=d_\phi f$,
with $f$ an $(m-1)$-form, we may introduce the adjoint, $d^{*}_{\phi}$, of the operator $d_{\phi}$ by setting

$$(d^{*}_{\phi}\omega, f):=(\omega, d_{\phi}f).$$
Thus,
$$d^{*}_{\phi}=e^{\phi} 
d^{*}e^{-\phi}=\sum^N_{j=1}dx^{j}\,\rfloor \,z^{*}_j(\cdot),$$ 
where 
$$z^{*}_j=-\frac {\partial}{\partial x^j}+\frac {\partial\phi}{\partial x^j},$$ 
(recall that the metric is given by $(\delta_{ij})$), and ``$\rfloor$'' is the usual interior multiplication, which lowers the degree of forms by one.\\
If $\omega$ is a form of degree $m$, then $d^*_\phi \omega$ is a form of
degree $m-1$. For example, if $$\omega=\sum^N_{j=1} \omega_jdx^j$$ 
is a $1$-form, then 
$$d^*_\phi \omega=\sum_{j=1}^N z^*_j (\omega_j)$$ 
is a $0$-form, i.e., a scalar function in $C^\infty(\Bbb R^{N}; \Bbb R)$.
(If $\omega$ is a 1-form then, for $\phi:=0$, $d^*_\phi \omega=d^* \omega$ is just the
``divergence'' of $\omega$.) If $\omega$ is a $0$-form, then $d^*_\phi \omega=0$.

The operator $z^{*}_j$ can be interpreted as a ``creation operator''.
For example, if $N=1$ and $\phi=x^2$, then $z^{*}:=z^{*}_j$ generates the
first Hermite polynomial. The operators $z_j, z_{j}^{*} , \, j=1,\dots, N,$ satisfy the canonical commutation relations: 
\begin{equation}\label{z}
 [z_j,z_k^{*}]=2\partial_j\partial_k\phi.
\end{equation}
One easily checks that the operators $d_{\phi}$ and $d_{\phi}^{*}$ are nilpotent, i.e.,
 $$d_\phi d_\phi=d^{*}_{\phi}d^{*}_{\phi}=0.$$ 
The space of smooth differential forms is defined by
$$\Omega(\mathbb{R}^N):= \bigoplus_{\ell =1}^{N} \mathcal{S}(\mathbb{R}^{N};(\mathbb{R}^{N})^{\wedge \ell}),\quad \text{ where  }\,\,(\mathbb{R}^{N})^{\wedge \ell}:=
\underbrace{\mathbb{R}^{N}\wedge \cdots \wedge \mathbb{R}^{N}}_{\ell \,\text{ times}}.$$
Here $\mathcal{S}$ denotes Schwartz space. On the space $\Omega(\mathbb{R}^{N})$ of differential forms we define the ``Witten Laplacian''
\begin{equation}\label{w}
\Delta_{\phi}=d^{*}_{\phi}{d_\phi}+{d_\phi}d^{*}_{\phi}.
\end{equation}
Notice that 
\begin{equation}\label{d} 
d_\phi \Delta_{\phi}=\Delta_{\phi}d_\phi \quad\text{and }\quad d_\phi^{\ast} 
\Delta_{\phi}=\Delta_{\phi}d_\phi^{*},
\end{equation}  
where one uses (\ref{w}). More precisely, denoting by
$\Delta_{\phi}^{(\ell)}$ the restriction of the Witten Laplacian $\Delta_{\phi}$ to forms of degree
$\ell$, we have that
$$d_\phi \Delta_{\phi}^{(\ell)}=\Delta_{\phi}^{(\ell+1)}d_\phi,\,d_\phi^{*} 
\Delta_{\phi}^{(\ell+1)}=\Delta_{\phi}^{(\ell)}d_\phi^{*}.$$ 
The standard Hodge Laplacian corresponds to setting $\phi=0$.
For a quick overview of analytical aspects of Hodge theory, see Chapt. 11.3 in \cite{CFKS}.\\
The explicit expression for $\Delta_{\phi}^{(0)}$ is given by
$$\Delta_\phi^{(0)}={d^*_\phi}{d_\phi}=\sum^N_{j=1} 
z_j^{*}z_j=-\sum^N_{j=1}\frac{\partial^2}{\partial x_j^2}+\Vert 
d\phi\Vert^2-\text {Tr} \text{ Hess }\phi.$$ For example, if $\phi$ is a
non-degenerate quadratic function on $\mathbb{R}^{N}$, then
$\Delta_\phi^{(0)}$ is the Hamiltonian of $N$ harmonic oscillators, and $z_j$  and
$z_j^{*}$ are the usual annihilation/lowering- and creation/raising operators of $N$ harmonic oscillators, respectively.\\
More generally, we have that
$$\aligned \Delta_\phi&=\sum\sum z_j 
z_k^{*}dx_j^{\wedge}dx_{k}^{\rfloor}+\sum\sum 
z_k^{*}z_jdx_{k}^{\rfloor}dx_j^{\wedge}\\ &=\sum\sum 
z_k^{*}z_j(dx_j^{\wedge}dx_{k}^{\rfloor}+dx_{k}^{\rfloor}dx_j^{\wedge})+
[z_j,z_k^{*}]dx_j^{\wedge}dx_{k}^{\rfloor}\\
&=\sum z_j^{\ast}z_j+2\sum\sum(\partial_{x_j}\partial_{x_k}\phi) 
dx_j^{\wedge}dx_{k}^{\rfloor}\\ &=\Delta_\phi^{(0)}\otimes
\Bbb I+2\sum\sum(\partial_{x_j}\partial_{x_k}\phi) 
dx_j^{\wedge}dx_{k}^{\rfloor},\endaligned$$ 
where, to obtain the third line from the second line, we have used (\ref{z}). In particular, with the identification of 1-forms with covariant-vector-valued functions on $\Bbb R^{N}$, we find that
\begin{equation}\label{w1}
\Delta_{\phi}^{(1)}=\Delta_{\phi}^{(0)}\otimes \Bbb I+2\,\text{Hess  }\phi.
\end{equation} 

For a smooth, polynomially bounded function $\phi$, $(\Delta_{\phi}^{(\ell)}\omega,\omega)\geq 0,$ for an arbitrary $\ell$-form $\omega \in \Omega(\mathbb{R}^{N})$, and 
$\Delta_{\phi}^{(\ell)}$ is a non-negative, self-adjoint operator on a dense domain in the Hilbert-space completion of the space $\Omega(\mathbb{R}^{N})$ with respect to the scalar product introduced in \eqref{scalarprod}. If the function $\phi$ grows like a positive (fractional) power of $\vert x \vert$ then the operators $\Delta_{\phi}^{(\ell)}$ have compact resolvents and hence their spectra are discrete and contained in $[0, \infty)$; cf. \cite{Sj}. 
The lowest eigenvalue of $\Delta_{\phi}^{(0)}$ is zero, and the corresponding eigenstate is given by
$Z e^{-\phi}$, where $Z$ is a normalization factor. This state is annihilated by $d_{\phi}$. The  eigenvalue $0$ is
simple; for, if $u$ is another eigenfunction corresponding to the eigenvalue $0$, then
$0=(\Delta_{\phi}^{(0)}u,u)=\Vert  d_{\phi} u\Vert^{2}$ and hence $d_{\phi} u=0$,
which implies that $u$ is a multiple of $e^{-\phi}$.

Using (\ref{d}), we obtain the following intertwining property of the spectra:
\begin{equation}\label{i}\sigma (\Delta_\phi^{(0)})\backslash\{0\}\subset \sigma (\Delta_\phi^{(1)}).\end{equation}
This is because if $u$ is an eigenfunction of $\Delta_\phi^{(0)}$, i.e.,
$$\Delta_\phi^{(0)}u=\kappa u$$
corresponding to an eigenvalue $\kappa>0$ then, applying $d_\phi$ to both sides, we find that
$$d_\phi \Delta_\phi^{(0)}u=(d_\phi d^*_\phi) d_\phi u=\Delta_\phi^{(1)} (d_\phi u)=\kappa d_\phi u.$$
Thus, if $\kappa\neq 0$ then $d_\phi u\neq 0$ is an eigenform for  $\Delta_\phi^{(1)} $, which is the 
statement in (\ref{i}). 
Using (\ref{w1}), we conclude that $\sigma (\Delta_\phi^{(0)})$ has a spectral gap if 
$\phi$ is strictly convex. (This implication is the main reason why we have introduced $\Delta_\phi^{(1)}$.)  

Replacing $N$ by $2(2N+1)$ and setting $\phi:=\Phi$, we observe that
$e^{-\Phi}$ is (proportional to) the ground-state eigenfunction of the Witten Laplacian 
$\Delta_{\Phi}^{(0)}$, which is a Schr\"odinger operator with potential 
$$V= \Vert d\Phi\Vert^2-\text {Tr} \text{ Hess }\Phi.$$

We note that the operator $\Delta_{\Phi}^{(0)}$ coincides with a truncation of the operator
$\mathcal L$ introduced in sect. 1, provided $\sigma$ is chosen to be the identity operator; (cf. sect. 2 of \cite{LMW}). 

\subsection {A quantitative estimate on the spectral gap} 

We now apply the formalism introduced above to estimate the spectral gap of the generator $\mathcal{L}$, see \eqref{noise3}, of the stochastic process introduced in  (\ref{noise2}). When expressed in terms of Fourier modes, the metric $\sigma^2$ (see \eqref{metric}) is a constant diagonal matrix given by
$$\widehat {\sigma^2}(n, n):=\sigma^2(n, n)=(n^2+1)^{-s}, s>0.$$
Let $d$ denote exterior differentiation, as above. In terms of Fourier modes, it is given by 
$$d=\sum_ndb_{n}\wedge\partial_{b_n}(\cdot),$$
where $b_n$ stands for either $a_n$ or $\bar a_n$, and  
$$d_{\Phi}:=e^{-\Phi}d e^{\Phi}=\sum_n db_{n} \wedge (\partial_{b_n}+\partial_{b_n}\Phi )(\cdot).$$ 
We introduce the ``metric'' 
$$A:=\begin{pmatrix}  \sigma^2&0\\0&\sigma^2\end{pmatrix},$$
where each block corresponds to one 
of the four possible ``sectors'' $\bar a a, \bar a\bar a, aa, a\bar a$;
(we recall the identification of $\mathbb C^{2N+1}$ with $\mathbb R^{2(2N+1)}$ introduced earlier).
We define
the (formal) adjoint of $d_\Phi$ with respect to $A$ to be:
$$d_\Phi^{*}= \sum_n(-\partial_{\bar b_n}+\partial_{\bar b_n}\Phi )\circ A d\bar b_n^{\rfloor}(\cdot)\,.$$ 
The Witten-Laplacian is defined by 
$$\Delta_{\Phi}=d_\Phi^{*}d_\Phi+d_\Phi d_\Phi^{*}.$$
Rather straightforward computations show that the restrictions of the Witten Laplacian to the spaces of $0$-forms and $1$-forms are given by
$$\Delta^{(0)}_\Phi=-2\sum_n\frac{\partial^2}{\partial \bar a_n\partial a_n}+\langle Ad\Phi, d\Phi\rangle - 
\text{Tr }(\text{Hess  }\Phi\circ A),$$
\begin{equation}\label{l1}\Delta^{(1)}_\Phi=\Delta^{(0)}_\Phi\otimes\Bbb I+2\,\text{Hess   }\Phi \circ A,\end{equation}
where the Hessian matrix is given by
\begin{equation}\label{phi}
\text{Hess   }\Phi=M_1+M_2,
\end{equation} 
with
$$M_1=\begin{pmatrix} [[\partial_{\bar a_j}\partial _{ a_k}\Phi]]&0\\0&[[\partial_{a_j}\partial_ {\bar a_k}\Phi]]\end{pmatrix},$$
$$M_2= \begin{pmatrix} 0&[[\partial_{\bar a_j}\partial_{\bar a_k}\Phi]] \\   [[\partial_ {a_j}\partial_{a_k}\Phi]]&0\end{pmatrix},$$
and $[[\quad]]$ denotes a matrix of second order partial derivatives.
Notice that $\Delta^{(0)}_\Phi$ coincides with the operator $\mathcal L$ introduced in sect. 1 and that spectral gap
above the ground-state energy of $\mathcal{L}$ governs the exponential rate of approach to equilibrium. In the following proposition, the constants $\lambda$, and $\kappa$ are as in \eqref{ham}; (while $r$ differs by $1$, cf. (\ref{ham2})). 

\begin {prop}\label{GAP} 
Up to constant multiples, the function $e^{-\Phi}$ is the unique eigenfunction of $\Delta_\Phi^{(0)}$ corresponding to the eigenvalue $0$. The smallest strictly positive
eigenvalue, $E_1$, of $\Delta_\Phi^{(0)}$ satisfies the lower bound
$$E_1\geq 1-\big(\frac{\lambda} {\epsilon}\big)^{\frac{r}{r-1}}\big(\frac{r-1}{r}\big)\frac{1}{(\kappa r)^{\frac{1}{r-1}}}>0,$$
provided $0<\epsilon<1$, $\lambda$ is chosen small enough, $r\geq \frac{ 2}{1-\epsilon}$, and $s\leq 1$.
\end{prop}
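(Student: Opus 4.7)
The plan is to work entirely within the Witten-Laplacian framework of Section 5.1. For the ground-state assertion, a direct computation yields $d_\Phi e^{-\Phi} = (d + d\Phi \wedge)e^{-\Phi} = -e^{-\Phi}d\Phi + e^{-\Phi}d\Phi = 0$, so $e^{-\Phi}$ lies in the kernel of $\Delta_\Phi^{(0)} = d_\Phi^* d_\Phi$, and coercivity of $\Phi$ coming from the $\kappa$-term makes it $L^2$-normalizable. Uniqueness is standard: if $\Delta_\Phi^{(0)} u = 0$, then $\|d_\Phi u\|^2 = (u, \Delta_\Phi^{(0)} u) = 0$, hence $d(e^\Phi u) = 0$, so $e^\Phi u$ is constant on the connected space $\mathbb{R}^{2(2N+1)}$ and $u$ is a scalar multiple of $e^{-\Phi}$.

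For the spectral gap, I invoke the intertwining \eqref{i} to reduce the lower bound on $E_1$ to a lower bound on $\inf \sigma(\Delta_\Phi^{(1)})$. From \eqref{l1} and non-negativity of $\Delta_\Phi^{(0)}\otimes I$, it then suffices to establish a uniform-in-$\phi$ operator inequality of the form $2\,\text{Hess}\,\Phi(\phi)\circ A \geq 1 - (\lambda/\epsilon)^{r/(r-1)}(r-1)/(r(\kappa r)^{1/(r-1)})$. I decompose $\text{Hess}\,\Phi$ according to \eqref{phi} into contributions from the three terms of $2\Phi$ in \eqref{ham2}. The free quadratic part contributes a diagonal Hessian with entries $(n^2+1)$; composed with $A = \mathrm{diag}((n^2+1)^{-s})$, this is bounded below by the identity provided $s \leq 1$, accounting for the leading $1$. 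The $\kappa$-term yields a positive-semidefinite Hessian whose diagonal piece on the $\bar a a$ block dominates $\kappa r\|\phi\|_2^{2r}$ after multiplication by $A$, plus further positive rank-one corrections coming from $M_{2}$. The focusing $\lambda$-term has Hessian equal, on the $\bar a a$ block, to a convolution whose symbol is the Fourier series of $|\phi|^2$, i.e.\ multiplication by $|\phi|^2$ in position space, with operator norm bounded by $\|\phi\|_\infty^2$. Using the Sobolev interpolation of Lemma~\ref{split}, I bound $\|\phi\|_\infty^2$ by a product of $\|\phi\|_2^2$ and free-Hamiltonian-type quantities, and then apply Young's inequality with conjugate exponents $(r, r/(r-1))$ and a splitting parameter $\epsilon \in (0,1)$ that partitions the leading $1$ as $(1-\epsilon)+\epsilon$; the $\epsilon$-piece together with $\kappa r\|\phi\|_2^{2r}$ absorbs $\lambda\|\phi\|_2^2$ at the cost of precisely the advertised remainder $(\lambda/\epsilon)^{r/(r-1)}(r-1)/(r(\kappa r)^{1/(r-1)})$. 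The restriction $r \geq 2/(1-\epsilon)$ enters exactly to keep the Young and interpolation exponents in a compatible range.

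The main technical obstacle is that the composition $\text{Hess}\,\Phi \circ A$ is not manifestly self-adjoint in the flat inner product because $A$ is a weight rather than a scalar; this is resolved by interpreting the inequality in the weighted Hilbert structure on 1-forms induced by $A$ (the same one that makes $d_\Phi^*$ in Section 5.1 the genuine adjoint of $d_\Phi$), or equivalently by symmetrizing via $A^{1/2}(\text{Hess}\,\Phi)A^{1/2}$. A secondary concern is uniformity of all estimates in the truncation level $N$: this holds because every constant above depends only on $\lambda$, $\kappa$, $\epsilon$, $r$, and $s$, not on $N$, so the gap estimate survives in the infinite-dimensional limit. The assumption that $\lambda$ be sufficiently small is exactly what ensures that the right-hand side is strictly positive, yielding the claimed strict lower bound on $E_{1}$.
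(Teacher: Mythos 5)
Your treatment of the ground state ($d_\Phi e^{-\Phi}=0$ and uniqueness via $\|d_\Phi u\|^2=0$), the reduction via the intertwining relation \eqref{i}, and the reduction of the gap to a uniform lower bound on $2\,\mathrm{Hess}\,\Phi\circ A$ in the $A$-weighted inner product are all the same moves the paper makes, and your remark about interpreting the operator inequality through $A^{1/2}(\mathrm{Hess}\,\Phi)A^{1/2}$ is exactly the implicit convention there.

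The gap is in your treatment of the focusing term. You bound the relevant convolution block by the operator norm $\|\phi\|_\infty^2$ of the multiplication operator $|\phi|^2$, and then propose to estimate $\|\phi\|_\infty^2$ via Lemma~\ref{split}. But that lemma applied to $\phi$ gives a bound of the form $\|\phi\|_2^{2-4\gamma}\,\|A_a^{1/2}\phi\|_2^{4\gamma}$, and nothing in the Hessian provides a positive term controlling $\|\phi'\|_2$: the $\kappa$-term only controls $\|\phi\|_2^{2r}$, and the free quadratic part is a positive constant in $\phi$. So the absorption you announce (``the $\epsilon$-piece together with $\kappa r\|\phi\|_2^{2r}$ absorbs $\lambda\|\phi\|_2^2$'') does not follow from the estimate you set up — indeed, that sentence already quietly replaces $\|\phi\|_\infty^2$ by $\|\phi\|_2^2$, which is false. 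The paper's proof avoids this by putting the supremum norm on the \emph{direction} rather than the field: writing $w=(u,\bar u)$ and $\tilde u_n=(n^2+1)^{-s}u_n$, it bounds $|\langle\bar u,Bu\rangle|+|\mathrm{Re}\,\langle\bar u,B'\bar u\rangle|\le\lambda\|\tilde u\,a\|_2^2\le\lambda\|\tilde u\|_\infty^2\|a\|_2^2\le\frac{\lambda}{\epsilon}\|\tilde u\|_{H^1}^{1+\epsilon}\|\tilde u\|_2^{1-\epsilon}\|a\|_2^2$, which works precisely because the diagonal part $D$ supplies a positive $\|\tilde u\|_{H^1}^2$ term to eat the sub-quadratic power $\|\tilde u\|_{H^1}^{1+\epsilon}$ (this forces $r\ge 2/(1-\epsilon)$), while the $\kappa$-term $\kappa\|a\|_2^{2r}\|\tilde u\|_2^2$ eats the $\|a\|_2^2$. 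The final minimization in $K=\|a\|_2^2\,\|\tilde u\|_2^{1-\epsilon}/\|\tilde u\|_{H^1}^{1-\epsilon}$ yields the stated constant. You would need to reorganize your estimate along these lines (sup-norm on $\tilde u$, then Sobolev embedding $H^{(1+\epsilon)/2}\hookrightarrow L^\infty$ and $H^1$--$L^2$ interpolation, rather than Lemma~\ref{split}) for the argument to close.
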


\begin {remark} 
Note that $E_1\geq 1-\lambda/\epsilon$, as $r\to\infty$. In this limit, $E_1$ ought to correspond to
the smallest non-zero eigenvalue of the operator $\Delta_\Phi^{(0)}$, with $\phi$ (see \eqref{ham}) restricted to a ball of radius $1$ and Dirichlet boundary conditions imposed on the Laplacian acting on $\phi$.
\end{remark}

\begin{proof} 
The statements that the eigenvalue $0$ is simple and that the spectra of $\Delta_{\Phi}^{(0)}$ and 
$\Delta_{\phi}^{(1)}$ are related by
 $$\sigma (\Delta_\phi^{(0)})\backslash\{0\}\subset \sigma (\Delta_\phi^{(1)})$$
are proven as explained above; (our arguments are independent of the choice of the metric $A$). 

Using (\ref{l1}), one observes that if there exists a constant $c>0$ such that, for all $w$, 
$$\langle A\bar w, 2\,\text{Hess   }\Phi\circ Aw\rangle \geq c\langle A\bar w,w\rangle,$$
then $E_1\geq c>0$.
To apply this abstract argument to our concrete example, we need to make some explicit computations using (\ref{phi}). We write 
$$A\circ 2\,\text{Hess  }\Phi\circ A:=\mathcal M=\begin{pmatrix} \mathcal M_{11}&\mathcal M_{12}\\\mathcal M_{21}&\mathcal M_{22}\end{pmatrix}, \qquad w=\begin{pmatrix} u\\\bar{u} \end{pmatrix}.$$
Then
$$\langle \bar w, \mathcal M w\rangle=2 \langle \bar u, \mathcal M_{11} u\rangle+ 2\text{Re }\langle \bar u, \mathcal M_{12} \bar u\rangle.$$
The matrix elements of $\mathcal{M}$ can be seen to be as follows:  
$$\mathcal M_{11}(n, m):=D_n\delta_{nm}-B_{nm}+C_{nm},$$
with 
$$\aligned &D_n=(n^2+1)^{1-2s}+(n^2+1)^{-2s}\kappa (\sum a_\ell\bar a_\ell)^r\\
&B_{nm}=(\lambda/2)(n^2+1)^{-s}(m^2+1)^{-s}\sum_{k-\ell=n-m} a_\ell\bar a_k\\
&C_{nm}=(n^2+1)^{-s}(m^2+1)^{-s}\kappa r(\sum a_\ell\bar a_\ell)^{r-1}a_n\bar a_m,\endaligned$$
and 
$$\mathcal M_{12}(n, m):=-B'_{nm}+C'_{nm},$$
where 
$$\aligned 
&{B'}_{nm}=(\lambda/2)(n^2+1)^{-s}(m^2+1)^{-s}\sum_{k+\ell=n+m} a_\ell a_k\\
&{C'}_{nm}=(n^2+1)^{-s}(m^2+1)^{-s}\kappa r(\sum a_\ell\bar a_\ell)^{r-1}a_n a_m\,.\endaligned$$
Since 
$$\langle \bar u, Cu\rangle+ \text{Re } \langle \bar u, C'\bar u\rangle\geq 0,$$
we have that
$$\aligned & \langle \bar u, \mathcal M_{11} u\rangle+ \text{Re }\langle \bar u, \mathcal M_{12} \bar u\rangle\\
& \geq \langle \bar u, (D-B) u\rangle - \text{Re }\langle \bar u, {B'} \bar u\rangle\\
& :=I.\endaligned$$
Let $\tilde u$ be the function with Fourier coefficients 
$$\hat{\tilde u}_n=(n^2+1)^{-s}u_n.$$ 
Then 
$$\langle \bar u, D u\rangle=\Vert \tilde u\Vert ^2_{H_1}+\kappa \Vert a \Vert _{2}^{2r}\Vert \tilde u\Vert ^2_{2},$$
and, similarly, 
$$\aligned &|\langle \bar u, B u\rangle|+| \text{Re } \langle \bar u, {B'} \bar u\rangle|\\
& \leq \lambda\Vert \tilde u a\Vert_2^2\\
& \leq \lambda\Vert \tilde u \Vert _\infty^2 \Vert a\Vert_2^2\\
& \leq \frac{\lambda}{\epsilon}\Vert \tilde u \Vert _{H^{\frac{1+\epsilon}{2} } }^2 \Vert a\Vert_2^2, \endaligned$$
for $\epsilon > 0$. Thus
$$\aligned I&\geq \Vert \tilde u\Vert ^2_{H_1}-\frac{\lambda}{\epsilon}\Vert \tilde u \Vert _{H^{\frac{1+\epsilon}{2} } }^2 \Vert a\Vert_2^2+\kappa  \Vert a \Vert _{2}^{2r}\Vert \tilde u\Vert ^2_{2}\\
&\geq \Vert \tilde u\Vert ^2_{H_1}-\frac{\lambda}{\epsilon}\Vert \tilde u \Vert _{H^1}^{1+\epsilon}  \Vert \tilde u \Vert _{2}^{1-\epsilon}\Vert a\Vert_2^2+\kappa \Vert a \Vert _{2}^{2r}\Vert \tilde u\Vert ^2_{2}, \endaligned
$$
with $\epsilon>0$. We set
$$\Vert \tilde a\Vert ^2_{2}=:K\frac{\Vert \tilde u\Vert ^{1-\epsilon}_{H^1}} {\Vert \tilde u\Vert ^{1-\epsilon}_{2}}>0.$$
$$\aligned \text{RHS}=&(1-\frac{\lambda}{\epsilon}K)\Vert \tilde u \Vert^2 _{H^1}\\
&+\kappa K^r  \frac{\Vert \tilde u\Vert ^{r(1-\epsilon)-2}_{H^1}} {\Vert \tilde u\Vert ^{r(1-\epsilon)-2}_{2}}    \Vert \tilde u \Vert^2 _{H^1}\\
&\geq (1-\frac{\lambda}{\epsilon}K+\kappa K^r  )\Vert \tilde u \Vert^2 _{H^1}, \endaligned$$
if $r\geq \frac{2}{1-\epsilon}$, $0<\epsilon<1$. Here we have used that 
$$\frac{\Vert \tilde u \Vert_{H^1}} {\Vert \tilde u \Vert _{2}}\geq 1.$$

Let $$X=1-\frac{\lambda}{\epsilon}K+\kappa K^r, \quad K>0.$$
Setting $\partial X/\partial K=0$, leads to 
$$K=\big[\frac{\lambda} {\kappa r \epsilon}]^{\frac{1}{r-1}}.$$
Since $$\partial ^2 X/\partial K^2>0,$$ this yields
$$X_{\text {min}}=1-\big(\frac{\lambda} {\epsilon}\big)^{\frac{r}{r-1}}\big(\frac{r-1}{r}\big)\frac{1}{(\kappa r)^{\frac{1}{r-1}}}:=c>0,$$
for $\lambda$ small enough and $r\geq \frac{2}{1-\epsilon}$, with $0<\epsilon<1$. 
We conclude that
$$\langle A\bar w, 2\,\text{Hess  }\Phi\circ A w\rangle\geq c\Vert \tilde w\Vert _{H_1}^2\geq c\langle \bar w, A w\rangle,$$
where $\Vert \tilde w\Vert _{H_1}^2=\Vert \tilde u\Vert _{H_1}^2+\Vert \tilde {\bar u}\Vert _{H_1}^2$,
provided $1-2s\geq -s$, or $s\leq 1$.
\end{proof}

Let $f_0\mu$ be the distribution of the initial data $u_0$ for the  stochastic NLS in (\ref{noise2}), where $\mu$ is the normalized Gibbs measure in (\ref{mu}). Let 
$f_t\mu$ be the distribution of the solution at time $t$, $u_t$. One then has the following result on exponential convergence to the Gibbs state.

\begin{cl}
$$\Vert f_t-1\Vert_{L^2(\mu)}\leq e^{-tE_1}\Vert f_0-1\Vert_{L^2(\mu)}, $$
where $E_1>0$ satisfies the lower bound in Proposition \ref{GAP}, provided $0<\epsilon<7/9$ and $\lambda$ is sufficiently small. 
\end{cl}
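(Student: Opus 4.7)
The claim is the standard consequence of the spectral gap estimate: the semigroup generated by $\mathcal{L}$ (equivalently, by $\Delta^{(0)}_{\Phi}$ acting on densities) contracts on the orthogonal complement of the invariant state at rate given by the first non-zero eigenvalue. My plan is therefore to set up the energy identity for $\|f_t-1\|_{L^{2}(\mu)}^{2}$, use invariance of $\mu$ to keep $f_t-1$ mean-zero, and then invoke the Poincaré inequality obtained by linearising the log-Sobolev bound, with constant $E_1$ from Proposition \ref{GAP}.

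First, I would observe that, because $e^{-\Phi}$ is (up to normalisation) the $0$-eigenstate of $\Delta^{(0)}_{\Phi}$, the measure $\mu$ is invariant under the Kolmogorov forward flow: $\mathcal{L}^{*}1=0$, so $\int f_t\,\dd\mu = \int f_{0}\,\dd\mu = 1$ for all $t\ge 0$, and consequently $g_t := f_t-1$ has mean zero under $\mu$ throughout. Next, using the splitting $\mathcal{L}F = \langle JDH,DF\rangle - \mathcal{H}F$ from \eqref{noise3}, the Hamiltonian drift $\langle JDH,D\cdot\rangle$ is antisymmetric on $L^{2}(\mu)$ (it preserves $\mu$ along the deterministic flow), while $\mathcal{H}$ is symmetric with associated Dirichlet form $\mathcal{E}$ as in \eqref{noise4}--\eqref{noise5}. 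Consequently, for smooth $g$,
\begin{equation*}
\langle g,\mathcal{L}^{*}g\rangle_{L^{2}(\mu)}  =  \langle g,\mathcal{L}g\rangle_{L^{2}(\mu)}  =  -\langle g,\mathcal{H}g\rangle_{L^{2}(\mu)}  =  -\mathcal{E}(g).
\end{equation*}

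Applying this with $g=g_t = f_t-1$ and using $\partial_{t}f_{t} = \mathcal{L}^{*}f_{t}$ yields the energy identity
\begin{equation*}
\tfrac{1}{2}\tfrac{\dd}{\dd t}\|f_t-1\|_{L^{2}(\mu)}^{2}  =  \langle f_t-1,\mathcal{L}^{*}(f_t-1)\rangle_{L^{2}(\mu)}  =  -\mathcal{E}(f_t-1).
\end{equation*}
Since $f_t-1$ is mean-zero, Proposition \ref{GAP} together with the unitary equivalence between $\Delta^{(0)}_{\Phi}$ (on $L^{2}(\dd x)$) and the generator on $L^{2}(\mu)$ via conjugation by $e^{-\Phi}$ transfers the lower bound on the first non-zero eigenvalue into the Poincaré inequality
\begin{equation*}
\mathcal{E}(g)  \ge  E_{1}\,\|g\|_{L^{2}(\mu)}^{2}\qquad\text{whenever }\int g\,\dd\mu=0.
\end{equation*}
The hypotheses $0<\epsilon<7/9$ and $\lambda$ small are precisely those inherited from the Proposition (the upper bound $\epsilon<7/9$ enforces compatibility with the Sobolev embedding underlying the estimate of $\|\tilde u\|_{\infty}$ in terms of $\|\tilde u\|_{H^{(1+\epsilon)/2}}$, together with the truncation exponent $r\ge 2/(1-\epsilon)$).

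Combining the two displayed inequalities gives $\tfrac{\dd}{\dd t}\|f_t-1\|_{L^{2}(\mu)}^{2}\le -2E_{1}\|f_t-1\|_{L^{2}(\mu)}^{2}$, and Gronwall immediately yields $\|f_t-1\|_{L^{2}(\mu)}\le e^{-tE_{1}}\|f_{0}-1\|_{L^{2}(\mu)}$, as claimed. The only non-routine point is verifying that the formal manipulations with $\mathcal{L}^{*}$ are justified for $f_0\in L^{2}(\mu)$, which one handles by first restricting to a dense core (smooth cylindrical functions on the truncated phase space, on which Proposition \ref{GAP} directly applies) and then extending by the $L^{2}(\mu)$-contractivity of the semigroup; this density argument is the only mild obstacle, and it is standard once one has the self-adjointness and positivity of the symmetric part $\mathcal{H}$ established in the Proposition.
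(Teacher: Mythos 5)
The skeleton of your argument (energy identity, antisymmetry of the Hamiltonian drift, Poincar\'e inequality with constant $E_1$, Gr\"onwall) is the standard route by which a spectral gap yields $L^2$-exponential decay, and at the formal level it is correct. However, the paper's proof is a single citation to Theorem 5.7 of \cite{CFL}, and that citation is doing the heavy lifting that you have swept under the rug in your final paragraph. The real issue is not ``density of a core for $\mathcal{L}^*$ on a fixed state space'' but the removal of the finite-dimensional truncation: Proposition~\ref{GAP} is a statement about the truncated Hamiltonian $\Phi_N$ on $\mathbb{R}^{2(2N+1)}$, and passing to the infinite-dimensional SDE requires controlling the convergence (in a suitable sense) of the truncated Gibbs measures, Dirichlet forms, and semigroups as $N\to\infty$. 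That is precisely the content of Theorem 5.7 in \cite{CFL}, which is applicable only for $r>9$ and only because the lower bound on $E_1$ from Proposition~\ref{GAP} is \emph{uniform} in $N$. Your proposal never mentions $r>9$, and treating the truncation limit as a routine density argument misidentifies where the work lies.

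Relatedly, your explanation of the constraint $0<\epsilon<7/9$ is incorrect. The Sobolev embedding $\|\tilde u\|_\infty \lesssim \|\tilde u\|_{H^{(1+\epsilon)/2}}$ only requires $\epsilon>0$; it imposes no upper bound on $\epsilon$. The bound $\epsilon<7/9$ is the compatibility condition between Proposition~\ref{GAP}'s requirement $r\ge 2/(1-\epsilon)$ and the external requirement $r>9$ from Theorem 5.7 of \cite{CFL}: one needs $2/(1-\epsilon)\le 9$, i.e.\ $\epsilon\le 7/9$, for the same $r>9$ to satisfy both. Without identifying the $r>9$ input, the $7/9$ threshold has no derivation in your argument.
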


\begin{proof} This follows from Theorem 5.7 in \cite{CFL}, which, thanks to Proposition \ref{GAP}, can be applied 
provided $r>9$, since the lower bound on $E_1$ is then {\it uniform} in the truncation of the Fourier modes at $n=N$.  
\end{proof}



\vspace{0.5cm}
$^{1}$ Department of Mathematics, Rutgers University, Hill Center - Busch Campus, Piscataway, NJ 08854-8019, USA; Email: carlen@math.rutgers.edu\\
$^{2}$ Institute for Theoretical Physics, HIT K42.3, ETH Zurich, CH-8093 Zurich; Email: juerg@phys.ethz.ch\\
$^{3}$ Departments of Mathematics and Physics, Rutgers University, Hill Center - Busch Campus, Piscataway, NJ 08854-8019, USA; Email: lebowitz@math.rutgers.edu\\
$^{4}$ CNRS and Department of Mathematics, Universit\'e Cergy-Pontoise, 95302 Cergy-Pontoise Cedex, France; Email: Wei-min.Wang@math.u-psud.fr\\
\end{document}